\newcommand{\cC}{{\mathcal C}}
\newcommand{\Z}{\mathbb{Z}}
\newcommand{\R}{\mathbb{R}}
\newcommand{\E}{\mathbb{E}}
\newcommand{\pr}{\mathbb{P}}
\newcommand{\var}{\operatorname{Var}}
\newcommand{\Ber}{\operatorname{Ber}}
\newcommand{\Bin}{\operatorname{Bin}}
\newcommand{\Poi}{\operatorname{Poi}}
\newcommand{\Par}{\operatorname{Par}}
\newcommand{\MPoi}{\operatorname{MPoi}}
\newcommand{\LogN}{\operatorname{LNor}}
\newcommand{\weq}{\ = \ }
\newcommand{\wle}{\ \le \ }
\newcommand{\wge}{\ \ge \ }
\newcommand{\lest}{\le_{\rm{st}}}
\newcommand{\lecx}{\le_{\rm{cx}}}
\newcommand{\gecx}{\ge_{\rm{cx}}}
\newcommand{\lecv}{\le_{\rm{cv}}}
\newcommand{\leicx}{\le_{\rm{icx}}}
\newcommand{\leicv}{\le_{\rm{icv}}}
\newcommand{\leLt}{\le_{\rm{Lt}}}
\newcommand{\abs}[1]{\lvert #1 \rvert}
\newcommand{\dto}{\xrightarrow{d}}
\newcommand{\Wto}{\xrightarrow{W}}
\theoremstyle{plain}
\newtheorem{theorem}{Theorem}[section]
\newtheorem{lemma}[theorem]{Lemma}
\newtheorem{proposition}[theorem]{Proposition}
\newtheorem*{theorem*}{Theorem}
\newtheorem*{lemma*}{Lemma}
\newtheorem*{proposition*}{Proposition}
\newtheorem*{conjecture*}{Conjecture}
\newtheorem{fact*}{Fact}
\theoremstyle{definition}
\newtheorem{example}[theorem]{Example}
\newtheorem{remark}[theorem]{Remark}
\newtheorem*{definition*}{Definition}
\newtheorem*{question*}{Question}
\newtheorem*{example*}{Example}
\newtheorem*{remark*}{Remark}
\numberwithin{equation}{section}
\newcommand{\dtv}{d_{\rm{tv}}}
\newcommand{\Cmax}{\abs{\cC_{\rm{max}}}}
\newcommand{\alphacr}{\alpha_{\rm cr}}
\newcommand{\lambdacr}{\lambda_{\rm cr}}
\newcommand{\zetas}{\zeta_{\rm CM}}
\newif\ifshowb
\newif\ifshowr
\newcommand{\cred}[1]{#1} 
\newcommand{\fracpadding}{}
\newcommand{\setfracpadding}[1][2pt]{%
  \sbox0{$\frac{1}{2}$}%
  \dimen0=\ht0 \advance\dimen0 #1\relax
  \dimen2=\dp0 \advance\dimen2 #1\relax
  \edef\fracpadding{\vrule width 0pt height \the\dimen0 depth \the\dimen2\relax}%
}
\title{The impact of degree variability on connectivity properties of large networks\thanks{A preliminary version of this work was presented at the 12th Workshop on Algorithms and Models for the Web Graph (WAW '15), Eindhoven, Netherlands, December 2015. }}
\author{
Lasse Leskel\"a\thanks{
 Aalto University School of Science,
 Department of Mathematics and Systems Analysis,
 Otakaari 1 Espoo, Finland. 
 \ \href{http://math.aalto.fi/en/research/stochastics/}{\nolinkurl{math.aalto.fi/en/research/stochastics/}}
 }
\and
Hoa Ngo\footnotemark[2]
}
\date{13 January 2017}
\begin{document}
\maketitle
\begin{abstract}
The goal of this work is to study how increased variability in the degree distribution impacts the global connectivity properties of a large network. We approach this question by modeling the network as a uniform random graph with a given degree sequence. We analyze the effect of the degree variability on the approximate size of the largest connected component using stochastic ordering techniques. A counterexample shows that a higher degree variability may lead to a larger connected component, contrary to basic intuition about branching processes. When certain extremal cases are ruled out, the higher degree variability is shown to decrease the limiting approximate size of the largest connected component.
\end{abstract}


\section{Introduction}

Digital communication networks and online social media have dramatically increased the spread of information in our society. As a result, the global connectivity structure of communication between people appears to be better modeled as a dimension-free unstructured graph instead of a geometrical graph based on a two-dimensional grid, and the spread of messages over an online network can be modeled as an epidemic on a large random graph. When the nodes of the network spread the epidemic independently of each other, the final outcome of the epidemic, or the ultimate set of nodes that receive a message, corresponds to the connected component of the initial root node in a randomly thinned version of the original communication graph called the epidemic generated graph \cite{Ball_Sirl_Trapman_2014}. This is why the sizes of connected components are important in studying information dynamics in unstructured networks.

A characterizing statistical feature of many communication networks is the high variability among node degrees, which is manifested by observed approximate power-law shapes in empirical measurements. The simplest mathematical model that allows to capture the degree variability is the so-called configuration model which is defined as follows. Fix a set of nodes labeled using $[n] = \{1,2,\dots,n\}$ and a sequence of nonnegative integers
$
 d_n = \{d_n(1),\dots,d_n(n)\}
$ 
such that $\ell_n = \sum_{i=1}^n d_n(i)$ is even. Each node $i$ gets assigned $d_n(i)$ half-links, or stubs, and then we select a uniform random matching among the set of all half-links. A matched pair of half-links will form a link, and we denote by $X_{i,j}$ the number of links with one half-link assigned to $i$ and the other half-link assigned to $j$. The resulting random matrix $(X_{i,j})$ constitutes a random undirected multigraph on the node set $[n]$. This multigraph is  called the \emph{configuration model} generated by the degree sequence $d_n$. The multigraph is called simple if it contains no loops ($X_{i,i}=0$ for all $i$) and no parallel links ($X_{i,j} \le 1$ for all $i,j$). The distribution of the multigraph conditional on being simple is the same as the distribution of the uniform random graph in the space of graphs on $[n]$ with degree sequence $d_n$ \cite[Prop.~7.13]{VanDerHofstad_2014_Vol1}.

A tractable mathematical way to analyze large random graphs is to let the size of the graph grow to infinity and approximate the empirical degree distribution of the random graph
\[
 p_n(k)  = \frac{1}{n} \sum_{i=1}^n 1(d_n(i) = k)
\]
using a limiting probability distribution $p$ on the infinite set of nonnegative integers $\Z_+$. One of the key results in the theory of random graphs is the following, first derived by Molloy and Reed \cite{Molloy_Reed_1995,Molloy_Reed_1998} and strengthened by Janson and {\L}uczak \cite{Janson_Luczak_2009}.
%
%
Assume that the collection of degree sequences $(d_n)$ is such that the corresponding empirical degree distributions satisfy
\begin{equation}
\label{eq:CMRegular}
\begin{aligned}
 & p_n(k) \to p(k) \quad \text{for all $k \ge 0$}, \\
 & \quad \quad \sup_{n} m_2(p_n) < \infty, 
\end{aligned}
\end{equation}
and that $p(2) < 1$ and $0 < m_1(p) < \infty$, where $m_i(p) = \sum_k k^i p(k)$ denotes the $i$th moment $p$.
Then \cite[Thm 2.3, Rem~2.7]{Janson_Luczak_2009} the size of the largest connected component $\Cmax$ in the configuration model multigraph (and in the associated uniform random graph) converges according to
\begin{equation}
 \label{eq:MaxComponent}
 n^{-1} \Cmax \to \zetas(p) \qquad \text{(in probability)},
\end{equation}
where the constant $\zetas(p) \in [0,1]$ is uniquely characterized by $p$ and satisfies $\zetas(p) > 0$ if and only if $m_2(p) > 2 m_1(p)$. The above fundamental result is important because it has been extended to models of wide generality (e.g.\ \cite{Bollobas_Janson_Riordan_2011}).

Most earlier mathematical studies (and extensions) have focused on establishing the phase transition (showing that there is a critical phenomenon related to whether or not $\zetas(p) > 0$), and studying the behavior of the model near the critical regime. On the other hand, for practical applications it may crucial to be able to predict the size of $\zetas(p)$ based on estimates of the degree distribution $p$.
This paper aims to obtain qualitative insight into this question by studying properties of the functional $p \mapsto \zetas(p)$ in detail by analyzing its sensitivity to the variability of $p$.
\cred{
As a robust tool for comparing levels of variability, we apply stochastic convex ordering techniques. Our main results are Theorem~\ref{the:OrderingSmallExtinction}, which confirms that a higher degree variability decreases the limiting component size when certain special cases are ruled out, and Theorem~\ref{the:main2} which confirms the same for sufficiently supercritical mixed Poisson distributions with heavy tails. We also provide counterexamples which show that, rather counterintuitively, a higher degree variability may lead to a larger connected component in some special cases. Despite the vast literature on the asymptotics of configuration models (cf.\ \cite{VanDerHofstad_2014_Vol1} and references therein) and numerous works on the stochastic ordering properties of branching processes (eg.\ \cite{Pellerey_2007,Sawaya_Klaere_2014,Valdes_Yera_Zuaznabar_2014}), this paper appears to be the first of its kind to study the size biasing effects prominent in most random graph models of interest using stochastic ordering techniques.

The rest of the paper is organized as follows. In Section~\ref{Branching functional of the configuration model} we introduce notations and recollect basic results of size-biased distributions relevant to configuration models. Section~\ref{BO} summarizes various stochastic ordering notions related to branching processes. The key contributions of the paper are in Section~\ref{the:OrderinCM}, which contains the main results and their proofs, together with counterexamples and numerical experiments. Section~\ref{sec:Conclusions} concludes the paper.
}

\section{Branching functional of the configuration model} 
\label{Branching functional of the configuration model}
\subsection{Size biasing and downshifting}

The configuration model, like many real-world networks, exhibits a size-bias phenomenon in degrees, in that ``your friends are likely to have more friends than you do''. The \emph{size biasing} of a probability distribution $\mu$ on the nonnegative real line $\R_+$ (or a subset thereof) with mean $m_1(\mu) = \int x \mu(dx) \in (0,\infty)$, is the probability distribution $\mu^*$ defined by
\[
 \mu^*(B) = \frac{\int_B x \mu(dx)}{m_1(\mu)}, \quad B \subset \R_+.
\]
\cred{
Furthermore, the \emph{downshifted size biasing} of $\mu$, denoted $\mu^\circ$, is defined as the distribution of $X^\circ = X^*-1$ where $X^*$ is a random number distributed according to $\mu^*$. Note that if $X$ and $X^*$ are random numbers with distributions $\mu$ and $\mu^*$, respectively, then
\begin{equation}
 \label{eq:SizeBiasing}
 \E \, \phi(X^*) = \frac{\E \, \phi(X) X}{\E X}
\end{equation}
for any real function $\phi$ such that the above expectations exist. 

For probability distributions on $\Z_+=\{0,1,2,\dots\}$ we use the same symbol $p$ both for the distribution and its probability mass function. Then the size biasing and the downshifted size biasing of $p$ can be represented as
\begin{equation}
 \label{eq:DownshiftedSizebiasing}
 p^*(k) = \frac{k p(k)}{m_1(p)} 
 \qquad \text{and} \qquad
 p^\circ(k) = p^*(k+1).
\end{equation}
}
If $G_p(s) = \sum_{k \ge 0} s^k p(k)$ denotes the generating function of $p$, then the generating functions of $p^*$ and $p^\circ$ are given by 
\begin{equation}
 \label{eq:GeneratingFunctions}
 G_{p^*}(s) = \frac{s}{m_1(p)} G_p^{'}(s)
 \qquad\text{and}\qquad
 G_{p^\circ}(s) = \frac{1}{m_1(p)} G_p^{'}(s).
\end{equation}

\cred{
Table~\ref{tab:SizeBiasing} below summarizes the size biasings of some common probability distributions that are used in the sequel. Here $\delta_x$ denotes the Dirac point mass at $x$, $\Bin(n,p)$ and $\Poi(c)$ refer to the standard binomial and Poisson distributions, and $\MPoi(\mu)$ denotes the $\mu$-mixed Poisson distribution on $\Z_+$ with probability mass function
\[
 p(k) = \int_{\R_+} e^{{-\lambda}} \frac{\lambda^k}{k!} \, \mu(d\lambda), \quad k \in \Z_+.
\]
Moreover, $\Par(\alpha,c)$ is the Pareto distribution with shape $\alpha > 1$ and scale $c > 0$ (density function $f(t) = \alpha c^\alpha t^{-\alpha-1}$, $t > c$), and $\LogN(b,\sigma^2)$ is the lognormal distribution with location $b \in \R$ and scale $\sigma > 0$ (density function  $f(t) = \frac{1}{t\sqrt{2\pi \sigma^2}}\exp(-\frac{(\log t - b)^2}{2\sigma^2})$ for $t>0$ ) \cite[Equation (45)]{Arratia_Goldstein_Kochman_2015}.
}

\begin{table}[h]
\begin{center}
\setfracpadding
\cred{
\begin{tabular}{lll}
\toprule
Distribution \qquad ~ & Size biasing \quad ~ & Downshifted size biasing \\
\midrule
$\delta_x$ & $\delta_x$ & $\delta_{x-1}$ \\
$\Bin(n,p)$ & $\Bin(n-1,p) + 1$ & $\Bin(n-1,p)$ \\
$\Poi(c)$ & $\Poi(c)+1$ & $\Poi(c)$ \\
$\MPoi(\mu)$ & $\MPoi(\mu^*)+1$ & $\MPoi(\mu^*)$ \\
$\Par(\alpha,c)$ & $\Par(\alpha-1,c)$ & $\Par(\alpha-1,c)-1$ \\
$\LogN(b,\sigma^2)$ & $\LogN(b+\sigma^2, \sigma^2)$ & $\LogN(b+\sigma^2, \sigma^2) - 1$\\
\bottomrule
\end{tabular}
}
\end{center}
\caption{\label{tab:SizeBiasing} Size biasings of some common probability distributions ($\mu \pm 1$ is shorthand for the distribution of $X \pm 1$ with $X$ distributed according to $\mu$).}
\end{table}

\cred{In Section~\ref{sec:Social} we analyze in detail a class of Pareto-mixed Poisson distributions $\MPoi(\Par(\alpha,c))$ parameterized by $\alpha > 1$ and $c>0$. This class serves as a convenient benchmark for degree distributions of random graphs because it inherits the heavy-tailed behavior of the Pareto distribution, and because its downshifted size biasings are given in a simple form by
\begin{equation}
 \label{eq:ParPoi}
 \MPoi(\Par(\alpha,c))^\circ
 \weq \MPoi(\Par(\alpha,c)^*)
 \weq \MPoi(\Par(\alpha-1,c)).
\end{equation}
}

\subsection{Definition of the branching functional}

Given a probability distribution $p$ on $\Z_+$, we denote by
\[
 \eta(p) = \inf\{s \ge 0: G_p(s) = s\}
\]
the smallest fixed point of the generating function $G_p(s) = \sum_{k \ge 0} s^k p(k)$ in the interval $[0,1]$. Classical branching process theory (e.g.\ \cite{Grimmett_Stirzaker_2001,VanDerHofstad_2014_Vol1}) tells that $\eta(p) \in [0,1]$ is well defined and equal to the extinction probability of a Galton--Watson process with offspring distribution $p$. We denote the corresponding survival probability by
\begin{equation}
 \label{eq:Survival}
 \zeta(p) = 1 - \eta(p).
\end{equation}

As a consequence of \cite[Thm~2.3]{Janson_Luczak_2009}, the limiting maximum component size of a configuration model with limiting degree distribution $p$ corresponds to the survival probability of a two-stage branching process where the root node has offspring distribution $p$ and all other nodes have offspring distribution $p^\circ$ defined by \eqref{eq:DownshiftedSizebiasing}. Therefore, the branching functional $p \mapsto \zetas(p)$ appearing in  \eqref{eq:MaxComponent} can be written as
\begin{equation}
 \label{eq:zetas}
 \zetas(p) = 1 - G_p(\eta(p^\circ)).
\end{equation}

The following two elementary examples will serve to illustrate the nonconvexity for the branching functional (see Remark~\ref{rem:Nonconvex}).

\begin{example}
For the Dirac measure $\delta_n$ at an integer $n \ge 1$ we have
\begin{equation}
 \label{eq:Degenerate}
 \zeta_{\rm CM}(\delta_n)
 = \begin{cases}
 0, &\quad n =1, \\
 1, &\quad n \ge 2.
 \end{cases}
\end{equation}
To verify this it suffices to note that $\zetas(\delta_n) = 1 - G_{\delta_n}(\eta(\delta_n^\circ)) = 1 - \eta(\delta_n^\circ)^n$ where $\eta(\delta_n^\circ) = \eta(\delta_{n-1})$ equals one for $n=1$ and zero for $n \ge 2$.
\end{example}

\begin{example}
For the uniform distribution on $\{1,n\}$ we have
\begin{equation}
 \label{eq:Binary}
 \zetas \left( \frac{1}{2} \delta_1 + \frac{1}{2} \delta_n \right)
 \weq \begin{cases}
 0, &\quad n = 1,2, \\
 1-o(1), &\quad n \gg 2.
 \end{cases}
\end{equation}
To verify this, an elementary computation shows that the downshifted size biasing of $p_n =  \frac{1}{2} \delta_1 + \frac{1}{2} \delta_n$ equals
\[
 p_n^\circ \weq \left( \frac{1}{n+1} \right) \delta_0 +  \left( \frac{n}{n+1} \right) \delta_{n-1}.
\]
For $n =1,2$ the support of $p_n^\circ$ is contained in $\{0,1\}$ which implies that $\eta(p_n^\circ) = 1$. On the other hand, because $G_{p_n^\circ}(s) \to 0$ uniformly on $[0,s_0]$ for all $s_0 < 1$, it follows that $\eta(p_n^\circ) = o(1)$ as $n \to \infty$. Hence the fact that $G_{p_n}(s) \le s$ implies $G_{p_n}(\eta(p_n^\circ)) = o(1)$, and we may conclude \eqref{eq:Binary}.
\end{example}

\begin{remark}
\label{rem:Nonconvex}
The functional $p \mapsto \zeta_{\rm CM}(p)$ is not convex nor concave. To see why, consider the probability distribution $\alpha p + (1-\alpha) q$ where $\alpha = 1/2$, $p = \delta_1$, and $q=\delta_n$ for some integer $n \ge 1$. Then by formula \eqref{eq:Degenerate},
\[
 \alpha \zetas(p) + (1-\alpha) \zetas(q)
 \weq \frac{1}{2}
 \quad \text{for all} \ n \ge 2,
\]
whereas by formula \eqref{eq:Binary},
\[
 \zetas( \alpha p + (1-\alpha)q)
 \weq
 \begin{cases}
 0, &\quad n = 1,2, \\
 1-o(1), &\quad n \gg 2.
 \end{cases}
\]
\end{remark}

\subsection{Continuity properties}

\cred{We will next prove a continuity property of the branching functional which is needed for proving Theorem~\ref{the:main2}, one of the main results of this paper.} For probability distributions $\mu,\mu_1,\mu_2,\dots$ on $\R_+$, we denote convergence in distribution by $\mu_n \dto \mu$. For probability distributions on $\R_+$ with finite first moments, we denote $\mu_n \Wto \mu$ if $\mu_n$ converges to $\mu$ in distribution and $(\mu_n)$ is uniformly integrable. Recall \cite[Theorem 7.1.5]{Ambrosio_Gigli_Savare_2008} that $\mu_n \Wto \mu$ is equivalent to convergence in the Wasserstein metric defined by
\[
 d_{W}(\mu,\nu) \weq \inf \int_{\R_+ \times \R_+} |x-y| \, \gamma(dx,dy),
\]
where the infimum is computed over the set of all couplings of $\mu$ and $\nu$. See \cite{Leskela_Vihola_2013} for a probabilistic discussion about the Wasserstein metric.

\begin{theorem}
\label{the:ContinuityCM}
Let $p,p_n$ be probability measures on $\Z_+$ each with a finite nonzero mean. If $p_n \Wto p$ and $p(1) > 0$, then $\zetas(p_n) \to \zetas(p)$.
\end{theorem}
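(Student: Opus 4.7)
The plan is to use the representation $\zetas(p) = 1 - G_p(\eta(p^\circ))$ and establish two convergences: the fixed points $\eta(p_n^\circ) \to \eta(p^\circ)$ and the evaluations $G_{p_n}(\eta(p_n^\circ)) \to G_p(\eta(p^\circ))$. The hypothesis $p(1) > 0$ enters only to ensure $p^\circ(0) = p(1)/m_1(p) > 0$, which rules out pathological tangency of $G_{p^\circ}$ to the diagonal on $[0,1)$.

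First I unpack the Wasserstein hypothesis. From $p_n \Wto p$ I extract both pointwise convergence $p_n(k) \to p(k)$ and mean convergence $m_1(p_n) \to m_1(p) > 0$, whence $p_n^\circ(k) = (k+1) p_n(k+1)/m_1(p_n) \to p^\circ(k)$ for every $k \in \Z_+$. A standard tail estimate using $\sum_{k > M} s^k \le s^{M+1}/(1-s)$ then upgrades these pointwise convergences to uniform convergences $G_{p_n} \to G_p$ and $G_{p_n^\circ} \to G_{p^\circ}$ on every interval $[0, 1-\delta]$ with $\delta > 0$.

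Next I prove $\eta(p_n^\circ) \to \eta(p^\circ)$, splitting into two cases. In the supercritical case $\eta(p^\circ) < 1$, continuity of $G_{p^\circ}$ together with $G_{p^\circ}(0) > 0$ and the defining property of $\eta(p^\circ)$ as the smallest fixed point force $G_{p^\circ}(s) > s$ on $[0, \eta(p^\circ))$, while strict convexity of $G_{p^\circ}$ yields $G_{p^\circ}(s) < s$ on $(\eta(p^\circ), 1)$. Uniform convergence transfers both inequalities to $G_{p_n^\circ}$ for large $n$, and an intermediate value argument confines $\eta(p_n^\circ)$ to an arbitrarily small neighbourhood of $\eta(p^\circ)$. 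In the subcritical or critical case $\eta(p^\circ) = 1$, the combination of $G_{p^\circ}(0) > 0$, continuity, and the absence of fixed points in $[0,1)$ forces $\inf_{s \in [0, 1-\delta]}(G_{p^\circ}(s) - s) > 0$; uniform convergence then gives $G_{p_n^\circ}(s) > s$ on $[0, 1-\delta]$ for large $n$, so $\eta(p_n^\circ) \ge 1 - \delta$ eventually, and $\eta(p_n^\circ) \to 1$.

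Finally, $G_{p_n}(\eta(p_n^\circ)) \to G_p(\eta(p^\circ))$ follows from uniform convergence of $G_{p_n}$ on $[0, 1-\delta]$ combined with continuity of $G_p$ when $\eta(p^\circ) < 1$, and from the monotonicity sandwich $G_{p_n}(1-\delta) \le G_{p_n}(\eta(p_n^\circ)) \le 1$ (sending first $n \to \infty$, then $\delta \to 0$) when $\eta(p^\circ) = 1$. The main obstacle is the subcritical/critical case of the second step: without the hypothesis $p(1) > 0$ the result can truly fail. Indeed, for $p = \delta_2$ and $p_n = (1 - \tfrac{1}{n}) \delta_2 + \tfrac{1}{n} \delta_1$ one has $p_n \Wto p$ but $\eta(p_n^\circ) = 1$ jumps to $\eta(p^\circ) = 0$, giving $\zetas(p_n) = 0 \not\to 1 = \zetas(p)$; the hypothesis $p(1) > 0$ precisely rules this out by guaranteeing $p^\circ(0) > 0$.
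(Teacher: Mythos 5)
Your proposal is correct and follows essentially the same route as the paper: decompose $\zetas(p)=1-G_p(\eta(p^\circ))$, show $p_n^\circ\to p^\circ$ (the paper's Lemma~\ref{the:BiasedConvergence}), and prove continuity of the extinction probability under $p^\circ(0)>0$ by the identical two-case analysis ($\eta<1$ via strict convexity, $\eta=1$ via a uniform gap on $[0,1-\delta]$) that appears in the paper's Appendix~\ref{sec:Appendix}. The only differences are cosmetic — you inline the two lemmas and work with uniform convergence on $[0,1-\delta]$ rather than $[0,1]$ — and your counterexample for $p(1)=0$ is a nice extra.
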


The proof of the theorem is based on the following two lemmas. The first states that convergence in the Wasserstein metric implies convergence in distribution for associated size biasings.

\begin{lemma}
\label{the:BiasedConvergence}
Let $\mu,\mu_1,\mu_2,\dots$ be probability measures on $\R_+$ each with a finite nonzero mean. If $\mu_n \Wto \mu$, then $\mu_n^* \dto \mu^*$ and $\mu_n^\circ \dto \mu^\circ$.
\end{lemma}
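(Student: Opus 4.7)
The plan is to establish $\mu_n^* \dto \mu^*$ directly from the explicit size-biasing formula, and then to derive $\mu_n^\circ \dto \mu^\circ$ as an immediate corollary via the continuous mapping theorem applied to the downshift $x \mapsto x-1$. Concretely, for every bounded continuous $\phi \colon \R_+ \to \R$ I would analyze the ratio
\[
 \int \phi \, d\mu_n^* \weq \frac{\int x \phi(x) \, \mu_n(dx)}{m_1(\mu_n)}
\]
separately in its numerator and denominator, and show that each converges to the corresponding quantity for $\mu$.

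The denominator is straightforward: recall that $\mu_n \Wto \mu$ is equivalent to $\mu_n \dto \mu$ together with uniform integrability of the family $(\mu_n)$, and testing against the identity yields $m_1(\mu_n) \to m_1(\mu)$, which is strictly positive by hypothesis, so $1/m_1(\mu_n) \to 1/m_1(\mu)$.

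The main technical step, and essentially the only real obstacle, is the numerator, where the integrand $x \phi(x)$ is continuous but only of linear growth, so ordinary weak convergence is not enough. I would handle this by a truncation argument: for each $K > 0$ the function $\psi_K(x) = \phi(x) (x \wedge K)$ is bounded and continuous, hence $\int \psi_K \, d\mu_n \to \int \psi_K \, d\mu$ by the Portmanteau theorem, while the truncation error is controlled by
\[
 \abs{\int x \phi(x) \, \mu_n(dx) - \int \psi_K \, d\mu_n}
 \wle \|\phi\|_\infty \int_{\{x > K\}} x \, \mu_n(dx),
\]
and the analogous bound holds for $\mu$. Uniform integrability of $(\mu_n)$ forces the right-hand side to be arbitrarily small uniformly in $n$ once $K$ is large, so a standard three-$\varepsilon$ argument delivers $\int \phi \, d\mu_n^* \to \int \phi \, d\mu^*$, establishing $\mu_n^* \dto \mu^*$.

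Finally, since $X^\circ \eqst X^* - 1$ whenever $X^* \sim \mu^*$, the second claim $\mu_n^\circ \dto \mu^\circ$ follows from $\mu_n^* \dto \mu^*$ by the continuous mapping theorem applied to the continuous map $x \mapsto x - 1$. The reason that Wasserstein convergence (rather than plain convergence in distribution) is required in the hypothesis is precisely the linear growth of the integrand in the numerator step: without uniform integrability one can produce examples in which mass escapes to infinity and size biasing fails to converge, so the strengthening is essential rather than cosmetic.
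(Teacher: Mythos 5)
Your proposal is correct. It follows the same overall skeleton as the paper's proof --- write $\int\phi\,d\mu_n^*$ as a ratio, send the denominator $m_1(\mu_n)\to m_1(\mu)$ using uniform integrability, and obtain $\mu_n^\circ\dto\mu^\circ$ from $\mu_n^*\dto\mu^*$ by the continuous mapping theorem --- but the treatment of the numerator is genuinely different. The paper tests only against continuous $\phi$ with \emph{compact} support, for which $x\mapsto x\phi(x)$ is already bounded and continuous, concludes vague convergence $\mu_n^*\to\mu^*$, and then upgrades vague to weak convergence by noting that uniform integrability of $(\mu_n)$ yields tightness of $(\mu_n^*)$ (citing Kallenberg). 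You instead test against arbitrary bounded continuous $\phi$ and control the unbounded integrand $x\phi(x)$ by truncating at level $K$, with the error $\|\phi\|_\infty\int_{\{x>K\}}x\,\mu_n(dx)$ made uniformly small by the same uniform integrability. The two routes use the hypothesis in equivalent ways; yours is more self-contained (no appeal to the vague-plus-tight-implies-weak lemma), while the paper's is shorter given the cited references. Your closing remark that uniform integrability is essential, not cosmetic, is also correct: without it mass can escape to infinity and the size-biased measures need not converge.
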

\begin{proof}
Uniform integrability and $\mu_n \dto \mu$ imply \cite[Lemma 4.11]{Kallenberg_2002} that $m(\mu_n) \to m(\mu)$. If $\phi: \R_+ \to \R$ is continuous and has compact support, then $\psi(x) = x \phi(x)$ is continuous and bounded, and hence
\[
 \mu_n^*(\phi)
 = \frac{\mu_n(\psi)}{m(\mu_n)}
 \to \frac{\mu(\psi)}{m(\mu)}
 = \mu^*(\phi).
\]
We conclude that $\mu_n^* \to \mu^*$ vaguely. In addition, the uniform integrability of $(\mu_n)$ implies tightness of $(\mu_n^*)$, and we may conclude \cite[Lemma 5.20]{Kallenberg_2002} that $\mu_n^* \dto \mu^*$. The fact that $\mu_n^\circ \dto \mu^\circ$ follows from the continuous mapping theorem \cite[Theorem 4.27]{Kallenberg_2002}.
\end{proof}

The following result on the continuity of extinction probabilities is probably well known. Because we did not find it in the literature, the proof is included in Appendix~\ref{sec:Appendix} for reader's convenience.

\begin{lemma}
\label{the:ContinuityOfExtinctionProbabilities}
Let $p,p_n$ be probability measures on $\Z_+$. If $p_n \dto p$ and $p(0) > 0$, then $\eta(p_n) \to \eta(p)$.
\end{lemma}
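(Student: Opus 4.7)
I would prove the lemma by converting the statement about extinction probabilities into an analytic statement about the generating functions $G_{p_n}$ and $G_p$, and then exploiting the convexity of $G_p$ together with the hypothesis $p(0) > 0$ to control the position of the smallest fixed point under perturbation.

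First, since $p_n$ and $p$ are supported on $\Z_+$, the convergence $p_n \dto p$ is equivalent to pointwise convergence $p_n(k) \to p(k)$ for every $k \in \Z_+$. Applying dominated convergence with summable dominating sequence $(s^k)_{k \ge 0}$ for $s \in [0, 1)$ yields pointwise convergence $G_{p_n}(s) \to G_p(s)$ on $[0, 1]$, the endpoint $s = 1$ being trivial. Because each $G_{p_n}$ is nondecreasing and continuous on the compact interval $[0, 1]$ and the limit $G_p$ is also continuous (by Abel's theorem at $s=1$), a classical Dini--P\'olya argument upgrades this to uniform convergence on $[0, 1]$.

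Next I would study the convex function $f(s) = G_p(s) - s$ on $[0, 1]$, which satisfies $f(0) = p(0) > 0$ and $f(1) = 0$. If $\eta(p) < 1$, the minimality of $\eta(p)$ combined with convexity forces $f > 0$ on $[0, \eta(p))$ and $f < 0$ on $(\eta(p), 1)$, because an additional zero of $f$ would, by convexity, imply $f \equiv 0$ on a subinterval and hence $p = \delta_1$, contradicting $p(0) > 0$. For each sufficiently small $\epsilon \in (0, 1 - \eta(p))$, the quantities $\delta = \min_{s \in [0, \eta(p) - \epsilon]} f(s)$ and $\delta' = -f(\eta(p) + \epsilon)$ are both strictly positive, and uniform convergence then yields, for all $n$ large, $G_{p_n}(s) > s$ throughout $[0, \eta(p) - \epsilon]$ and $G_{p_n}(\eta(p) + \epsilon) < \eta(p) + \epsilon$. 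The first estimate rules out any fixed point of $G_{p_n}$ in $[0, \eta(p) - \epsilon]$, while the second together with $G_{p_n}(0) = p_n(0) > 0$ (which holds eventually since $p_n(0) \to p(0) > 0$) and the intermediate value theorem produces a fixed point of $G_{p_n}$ in $(0, \eta(p) + \epsilon)$. Hence $\eta(p_n) \in (\eta(p) - \epsilon, \eta(p) + \epsilon]$ for all $n$ large. In the remaining case $\eta(p) = 1$, the same convexity reasoning gives $f > 0$ on $[0, 1)$; uniform convergence on $[0, 1 - \epsilon]$ then forces $\eta(p_n) > 1 - \epsilon$ for $n$ large. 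In either case $\eta(p_n) \to \eta(p)$.

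The step I expect to require the most care is the two-sided sandwich in the subcritical case, specifically the need to guarantee that the fixed point produced by the intermediate value theorem is actually the smallest fixed point of $G_{p_n}$, not a larger one. The lower bound hinges on strict separation of $f$ from zero on $[0, \eta(p) - \epsilon]$, which is available only because $\eta(p)$ is the smallest fixed point of $G_p$ in $[0, 1]$; the upper bound hinges on $G_{p_n}(0) = p_n(0) > 0$, which is exactly the place where the hypothesis $p(0) > 0$ is indispensable, as it excludes an accidental fixed point of $G_{p_n}$ at the origin. Without this hypothesis the smallest fixed point of $G_{p_n}$ could collapse to $0$ even when $\eta(p) > 0$, so the assumption cannot simply be dropped.
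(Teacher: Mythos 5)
Your proof is correct and follows essentially the same route as the paper's: uniform convergence of the generating functions on $[0,1]$, strict separation of $G_p$ from the identity on either side of $\eta(p)$ via convexity, and a two-sided sandwich argument (with a one-sided version when $\eta(p)=1$). The only cosmetic difference is that you obtain uniform convergence via P\'olya's monotone-function argument, whereas the paper bounds $\abs{G_{p_n}(t)-G_p(t)}$ by twice the total variation distance; both are valid.
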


\begin{proof}[Proof of Theorem~\ref{the:ContinuityCM}]
By Lemma~\ref{the:BiasedConvergence}, $p_n^\circ \dto p^\circ$. Moreover, $p^\circ(0) = \frac{p(1)}{m_1(p)} > 0$. Hence $\eta(p_n^\circ) \to \eta(p^\circ)$ by Lemma~\ref{the:ContinuityOfExtinctionProbabilities}. The assumption that $p_n \dto p$ also implies that $G_{p_n} \to G_p$ uniformly on $[0,1]$, as explained in the proof of Lemma~\ref{the:ContinuityOfExtinctionProbabilities}.  Hence
\[
 \zetas(p_n)
 \weq 1- G_{p_n}(\eta(p_n^\circ)) 
 \ \to \ 1- G_{p}(\eta(p^\circ)) 
 \weq \zetas(p).
\]
\end{proof}

\subsection{Upper bounds}
\label{the:Upper}
A simple closed-form expression for $\zetas(p)$ is not readily available due to the implicit definition of $\eta(p^\circ)$. To get a qualitative insight into the behavior of $\zetas(p)$ as  a functional of $p$, analytical bounds will be valuable. The following result presents a fundamental upper bound which only depends on the mean degree distribution. This result is implicitly contained in the proof of \cite[Theorem 2]{Britton_Trapman_2012}. Here we provide a short and transparent proof.

\begin{proposition}
\label{the:CrudeUpper1}
For any probability distribution $p$ with a finite nonzero mean $\lambda$,
\begin{align}
\label{eq:CrudeUpper1}
 \zeta_{\rm CM}(p) \wle \frac{\lambda}{2}.
\end{align}
\end{proposition}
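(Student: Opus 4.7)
The plan is to translate the fixed-point equation for $\eta(p^\circ)$ into a condition on $G_p'$, and then use the convexity of $G_p'$ on $[0,1]$ to bound the integral $\int_\eta^1 G_p'(s)\,ds = 1 - G_p(\eta)$.

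First I would write $\zetas(p) = 1 - G_p(\eta)$ where $\eta := \eta(p^\circ)$. Using the correct relation $G_{p^\circ}(s) = G_p'(s)/\lambda$ (which is immediate from $p^\circ(k) = (k+1)p(k+1)/\lambda$), the fixed-point equation $G_{p^\circ}(\eta) = \eta$ becomes
\[
 G_p'(\eta) \weq \lambda\eta.
\]
Thus we know the value of the derivative $G_p'$ at two points of $[\eta,1]$: $G_p'(\eta) = \lambda\eta$ and $G_p'(1) = \lambda$.

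Next I would observe that $G_p'(s) = \sum_{k \ge 1} k\,p(k)\, s^{k-1}$ is the sum of the constant term $p(1)$ and the convex functions $k\,p(k)\,s^{k-1}$ for $k \ge 2$, hence $G_p'$ is convex on $[0,1]$ (finiteness of $G_p'$ on $[0,1]$ is guaranteed by $\lambda<\infty$). Since a convex function lies below the chord joining two points of its graph, we get
\[
 G_p'(s) \wle \lambda\eta + \frac{s-\eta}{1-\eta}(\lambda - \lambda\eta) \weq \lambda s
 \quad\text{for } s \in [\eta,1].
\]

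Finally I would integrate:
\[
 \zetas(p) \weq 1 - G_p(\eta) \weq \int_\eta^1 G_p'(s)\,ds \wle \int_\eta^1 \lambda s \, ds \weq \frac{\lambda(1-\eta^2)}{2} \wle \frac{\lambda}{2}.
\]
(The case $\eta = 1$ is trivial since then $\zetas(p) = 1 - G_p(1) = 0$; the case where $G_p'$ is not differentiable at the endpoint $s=1$ does not arise because we only need convexity on the closed interval $[\eta,1] \subset [0,1]$.) The only conceptual hurdle is spotting that the fixed-point relation pins down the chord of $G_p'$ on $[\eta,1]$ exactly to the line $s \mapsto \lambda s$; once this is noted, the rest is a one-line integration and the intermediate estimate $\zetas(p) \le \lambda(1-\eta^2)/2$ is actually a slightly sharper bound than the advertised $\lambda/2$.
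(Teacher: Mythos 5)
Your proof is correct and is essentially the paper's own argument: the paper bounds $G_{p^\circ}(s)\le s$ on $[\eta,1]$ via convexity and the fixed points $G_{p^\circ}(\eta)=\eta$, $G_{p^\circ}(1)=1$, which after multiplying by $\lambda$ is exactly your chord bound $G_p'(s)\le\lambda s$, and then integrates in the same way. (You also correctly use $G_{p^\circ}=G_p'/m_1(p)$, repairing the missing prime in the paper's displayed identity \eqref{eq:GeneratingFunctions}.)
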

\begin{proof}
Denote $z = \eta(p^\circ)$, so that by definition,
$
 G_{p^\circ}(z) = z.
$
Moreover, the convexity of $G_{p^\circ}$ implies that $G_{p^\circ}(s) \le s$ for all $s \in [z,1]$. Next, by applying \eqref{eq:GeneratingFunctions} we can write
\[
 G_p(s) = 1-\lambda\int_s^1G_{p^\circ}(s) ds.
\]
Hence,
\[
 \zetas(p)
 \weq 1 - G_{ p}(z)
 \weq  \lambda \int_z^1 G_{p^\circ}(s) \, ds,
\]
and we may conclude that
\[
 \zetas(p)
 \wle  \lambda \int_z^1 s \, ds
 \wle \lambda \int_0^1 s \, ds
 \weq \frac{\lambda}{2}.
\]
\end{proof}

The upper bound in Proposition~\ref{the:CrudeUpper1} tells nothing for graphs of mean degree two or higher. The following result provides a crude upper bound applicable also for $\lambda \le 2$. Similar bounds for standard branching processes have been derived in \cite{Sawaya_Klaere_2014,Valdes_Yera_Zuaznabar_2014}.

\begin{proposition}
\label{the:CrudeUpper2}
For any probability distribution $p$ on $\Z_+$ with a finite nonzero mean $\lambda$,
\begin{equation}
 \label{eq:CrudeUpper2}
 \zetas(p) \ \le \ 1 - p(0) - \frac{p(1)^2}{\lambda}.
\end{equation}
\end{proposition}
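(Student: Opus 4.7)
The approach is to start from the identity $\zetas(p) = 1 - G_p(z)$ with $z := \eta(p^\circ)$ and to lower-bound $G_p(z)$ by truncating its power series after the linear term. Concretely, I plan to establish the two inequalities $z \ge p(1)/\lambda$ and $G_p(z) \ge p(0) + z\, p(1)$; chaining them yields $G_p(z) \ge p(0) + p(1)^2/\lambda$, and subtracting from $1$ gives exactly \eqref{eq:CrudeUpper2}.

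The second of these inequalities is immediate from nonnegativity of the power-series coefficients: $G_p(z) = \sum_{k \ge 0} z^k p(k) \ge p(0) + z\, p(1)$ for any $z \in [0,1]$. For the first, I use that $z = \eta(p^\circ)$ is itself a fixed point of $G_{p^\circ}$ (the set of fixed points of $G_{p^\circ}$ in $[0,1]$ is closed and contains $1$, so its infimum is attained). Expanding the fixed-point identity $z = G_{p^\circ}(z) = \sum_{k \ge 0} z^k p^\circ(k)$ and keeping only the constant term gives $z \ge p^\circ(0)$. Finally, by the definition \eqref{eq:DownshiftedSizebiasing} of downshifted size biasing, $p^\circ(0) = p^*(1) = p(1)/\lambda$, which closes the argument.

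There is no real technical obstacle: the whole proof reduces to two applications of the trivial truncation $\sum_k a_k s^k \ge a_0 + a_1 s$ for nonnegative $a_k$ and $s \in [0,1]$, together with the fixed-point equation for the smallest fixed point of $G_{p^\circ}$. The only minor point to verify is the corner case $p(1) = 0$, where the bound $z \ge p(1)/\lambda$ degenerates to $z \ge 0$; but the claim remains valid there because the term $p(1)^2/\lambda$ also vanishes, and the inequality $G_p(z) \ge p(0)$ is trivial.
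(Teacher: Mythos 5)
Your proposal is correct and follows essentially the same route as the paper: both prove $\eta(p^\circ) \ge p^\circ(0) = p(1)/\lambda$ and then apply the truncation $G_p(s) \ge p(0) + p(1)s$ at $s = \eta(p^\circ)$. The only cosmetic difference is that you justify the first inequality via the fixed-point equation $z = G_{p^\circ}(z) \ge p^\circ(0)$, whereas the paper invokes the probabilistic fact that a branching process with offspring law $p^\circ$ dies out immediately with probability $p^\circ(0)$.
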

\begin{proof}
Let $p^\circ$ be the downshifted size biasing of $p$ defined by \eqref{eq:DownshiftedSizebiasing}. Because a branching process with offspring distribution $p^\circ$ goes extinct at the first step with probability $p^\circ(0)$, it follows that
\[
 \eta(p^\circ) \ge p^\circ(0) = \frac{p(1)}{\lambda}.
\]
Together with $G_p(s) \ge p(0) + p(1) s$, this shows that
\[
 G_p(\eta(p^\circ)) \ge p(0) + \frac{p(1)^2}{\lambda}.
\]
The above inequality substituted into \eqref{eq:zetas} implies \eqref{eq:CrudeUpper2}.
\end{proof}

The following result provides a more accurate upper bound of $\zetas(p)$ based on $\lambda,p(0),p(1),p(2)$. Similar techniques may be applied to derive more accurate upper bounds when a larger collection of low values of the the probability mass function of $p$ are known.

\begin{proposition}
\label{the:CrudeUpper3}
For any probability distribution $p$ on $\Z_+$ with a finite nonzero mean $\lambda$,
\begin{equation}
 \label{eq:CrudeUpper3}
 \zetas(p) \ \le \ 1 - p(0) - p(1) a - p(2) a^2
\end{equation}
where $a = \frac{p(1)}{\lambda - 2 p(2)}$.
\end{proposition}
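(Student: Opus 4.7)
The plan is to mimic the proof of Proposition~\ref{the:CrudeUpper2} but refine both of its two ingredients by one order. Since $\zetas(p) = 1 - G_p(\eta(p^\circ))$, I need (i) a lower bound on $\eta(p^\circ)$ that uses the first two coefficients of $p^\circ$ instead of only $p^\circ(0)$, and (ii) a lower bound on $G_p$ at that point that keeps the first three coefficients $p(0),p(1),p(2)$ instead of only $p(0),p(1)$.

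For step (i), I would first compute, from the definition of $p^\circ$ via \eqref{eq:DownshiftedSizebiasing}, the two smallest coefficients $p^\circ(0) = p(1)/\lambda$ and $p^\circ(1) = 2p(2)/\lambda$. Writing $z = \eta(p^\circ)$ and using the fixed-point identity $G_{p^\circ}(z) = z$ together with the trivial termwise lower bound $G_{p^\circ}(s) \ge p^\circ(0) + p^\circ(1) s$ on $[0,1]$, I obtain
\[
  z \weq G_{p^\circ}(z) \wge \frac{p(1)}{\lambda} + \frac{2 p(2)}{\lambda} \, z.
\]
Solving for $z$ yields $z \ge p(1)/(\lambda - 2p(2)) = a$, provided $\lambda > 2 p(2)$. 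This condition is automatic unless $p$ is supported on $\{0,2\}$ (since $\lambda = \sum_k k p(k) \ge 2 p(2) + \sum_{k \ne 2} k p(k)$), and in that exceptional case $p(1) = 0$ and the claimed bound reduces to the already known $\zetas(p) \le 1 - p(0)$. The same chain shows $a \le 1$, because $\lambda \ge p(1) + 2 p(2)$ forces $\lambda - 2p(2) \ge p(1)$.

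For step (ii), I would use that $G_p$ is nondecreasing on $[0,1]$ and that the truncation $G_p(s) \ge p(0) + p(1) s + p(2) s^2$ holds for every $s \in [0,1]$ since the omitted terms are nonnegative. Combining with the bound $\eta(p^\circ) \ge a$ from step (i) gives
\[
  G_p(\eta(p^\circ)) \wge G_p(a) \wge p(0) + p(1) a + p(2) a^2,
\]
and substituting into \eqref{eq:zetas} yields \eqref{eq:CrudeUpper3}.

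There is no real obstacle here: the only mild subtlety is the well-definedness of $a$ when $\lambda = 2 p(2)$, which as noted can be disposed of by observing that $p$ must then be concentrated on $\{0,2\}$ and the inequality degenerates. The rest is a direct two-term sharpening of the one-term arguments used in Proposition~\ref{the:CrudeUpper2}, and the same scheme would extend to bounds based on $p(0),\ldots,p(K)$ by keeping more terms in both the fixed-point inequality for $G_{p^\circ}$ and the truncation of $G_p$.
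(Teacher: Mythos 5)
Your proposal is correct and follows essentially the same route as the paper: lower-bound $\eta(p^\circ)$ by $a$ using the fixed-point identity together with the two-term truncation $G_{p^\circ}(s) \ge p^\circ(0) + p^\circ(1)s$, then apply the monotonicity of $G_p$ and the three-term truncation $G_p(a) \ge p(0) + p(1)a + p(2)a^2$. Your extra care about the degenerate case $\lambda = 2p(2)$ (where $a$ is formally $0/0$) is a minor refinement the paper omits, but it does not change the argument.
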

\begin{proof}
Observe that $G_{p^\circ}(s) \ge p^\circ(0) + p^\circ(1) s$ implies
\[
 \eta(p^\circ)
 \weq G_{p^\circ}(\eta(p^\circ))
 \wge p^\circ(0) + p^\circ(1) \eta(p^\circ),
\]
so that 
\[
 \eta(p^\circ)
 \wge 
 \frac{p^\circ(0)}{1 - p^\circ(1)}
 \weq \frac{\lambda^{-1}p(1)}{1 - 2 \lambda^{-1} p(2)}
 \weq a.
\]
Then by \eqref{eq:zetas} and the monotonicity of $G_p$ we find that
\[
 \zetas(p)
 \weq 1 - G_p(\eta(p^\circ))
 \wle1 - G_p(a).
\]
Hence the claim follows by $G_p(a) \ge p(0) + p(1) a + p(2) a^2$.
\end{proof}

\subsection{Thinning}
\label{sec:Thinning}

The study of percolation and epidemics on random graphs requires the analysis of thinned degree distributions (see Section~\ref{sec:Social}). The \emph{$r$-thinning} of a probability distribution $p$ on $\Z_+$ with $r \in [0,1]$ is the probability distribution $T_r p$ on $\Z_+$ with probability mass function
\[
 T_r p(k)
 \weq \sum_{\ell \ge k} p(\ell) \binom{\ell}{k} r^{k} (1-r)^{\ell - k}
\]
and generating function
\[
 G_{T_r p}(s) \weq G_p \circ G_{\Ber(r)}(s) \weq G_p(1-r+rs).
\]
The $r$-thinning of $p$ can be recognized as a mixed binomial distribution of a random integer $X_r$ corresponding to a random vector $(X_r,X)$ where $X$ is $p$-distributed and the conditional distribution of $X_r$ given $X = n$ is $\Bin(n,r)$.  Alternatively, if $X,\theta_1,\theta_2,\dots$ are mutually independent and such that $X$ is $p$-distributed and $\theta_i$ is $\Ber(r)$-distributed, then $X_r = \sum_{i=1}^X \theta_i$ is distributed according to the $r$-thinning of $p$.

\begin{example}
The $r$-thinnings of Dirac, binomial, and Poisson distributions are given by $T_r(\delta_n) = \Bin(n,r)$, $T_r(\Bin(n,\alpha)) = \Bin(n,\alpha r)$, and $T_r(\Poi(\lambda)) = \Poi(\lambda r)$.
\end{example}

\begin{lemma}
\label{the:Commuting}
The downshifted size biasing and $r$-thinning operations commute according to $(T_r p)^\circ = T_r(p^\circ)$.
\end{lemma}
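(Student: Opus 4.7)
The plan is a direct computation with probability generating functions. Since two distributions on $\Z_+$ coincide iff their generating functions agree on $[0,1]$, it suffices to verify that $G_{(T_r p)^\circ}$ and $G_{T_r(p^\circ)}$ are the same function. The two operations in play act very cleanly on generating functions: $r$-thinning pre-composes with the affine map $s \mapsto 1-r+rs$, as already recorded by $G_{T_r p}(s) = G_p(1-r+rs)$, and downshifted size biasing satisfies $G_{q^\circ}(s) = G_q'(s)/m_1(q)$ for any probability distribution $q$ on $\Z_+$ with finite nonzero mean.

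First I would differentiate $G_{T_r p}(s) = G_p(1-r+rs)$ via the chain rule to obtain
\[
G_{T_r p}'(s) \weq r\, G_p'(1-r+rs),
\]
and in particular $m_1(T_r p) = G_{T_r p}'(1) = r\, m_1(p)$. Applying the downshifted-size-biasing identity with $q = T_r p$, the two factors of $r$ cancel:
\[
G_{(T_r p)^\circ}(s) \weq \frac{G_{T_r p}'(s)}{m_1(T_r p)} \weq \frac{r\, G_p'(1-r+rs)}{r\, m_1(p)} \weq \frac{G_p'(1-r+rs)}{m_1(p)}.
\]
On the other hand, applying the thinning rule to $p^\circ$ and then the downshifted-size-biasing identity,
\[
G_{T_r(p^\circ)}(s) \weq G_{p^\circ}(1-r+rs) \weq \frac{G_p'(1-r+rs)}{m_1(p)},
\]
which coincides with the previous display. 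Uniqueness of generating functions then yields $(T_r p)^\circ = T_r(p^\circ)$.

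There is no real obstacle, and the brevity reflects a clean conceptual point: the chain-rule factor $r$ produced by thinning exactly cancels the factor $r$ appearing in the mean $m_1(T_r p) = r\, m_1(p)$, which is precisely why the two operations commute. A probabilistic proof via the random-sum representation $T_r p = \mathrm{Law}\bigl(\sum_{i=1}^X \theta_i\bigr)$ combined with a size-biased-sum construction is possible but seems less transparent than this analytic check, so I would record only the generating-function argument.
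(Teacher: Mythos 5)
Your proof is correct and is essentially identical to the paper's: both differentiate $G_{T_r p}(s)=G_p(1-r+rs)$ via the chain rule, cancel the factor $r$ against $m_1(T_r p)=r\,m_1(p)$, and match the resulting generating functions. The only cosmetic difference is that you write the affine map $1-r+rs$ explicitly where the paper writes $G_{\Ber(r)}(s)$, and you use the (correct) identity $G_{q^\circ}(s)=G_q'(s)/m_1(q)$ directly rather than citing the paper's display \eqref{eq:GeneratingFunctions}.
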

\begin{proof}
Because $G_{\Ber(r)}'(s) = \frac{d}{ds} (1-r+rs) = r$, we find that
\[
 G_{T_r p}'(s) = G_p'(G_{\Ber(r)}(s)) \, r.
\]
Using this formula together with \eqref{eq:GeneratingFunctions} and $m(T_r p) = r m_1(p)$, we see that
\[
 G_{(T_r p)^\circ}(s)
 = \frac{G_{T_r p}'(s)}{m(T_r p)}
 = \frac{G_p'(G_{\Ber(r)}(s)) \, r}{r m_1(p)}
 = \frac{G_p'(G_{\Ber(r)}(s))}{m_1(p)},
\]
and from this we may conclude that
\[
 G_{(T_r p)^\circ}(s)
 = G_{p^\circ}(G_{\Ber(r)}(s))
 = G_{T_r(p^\circ)}(s).
\]
\end{proof}

\section{Stochastic ordering of branching processes}
\label{BO}

\subsection{Strong and convex stochastic orders}

The upper bound of $\zetas(p)$ obtained in Proposition~\ref{the:CrudeUpper2} is rough as it disregards information about the tail characteristics of $p$. To obtain better estimates, we will develop in this section techniques based on the theory of stochastic orders (see \cite{Muller_Stoyan_2002} or \cite{Shaked_Shanthikumar_2007} for comprehensive surveys).

Integral stochastic orderings between probability distributions on $\R$ (or a subset) are defined by requiring
\begin{equation}
\label{eq:IntegralOrder}
 \int \phi(x) \mu(dx) \le \int \phi(x) \nu(dx)
\end{equation}
to hold for all functions $\phi: \R \to \R$ in a certain class of functions such that both integrals above exist. The \emph{strong stochastic order} is defined by denoting $\mu \lest \nu$ if \eqref{eq:IntegralOrder} holds for all increasing functions $\phi$. The \emph{convex stochastic order} (resp.\ concave, increasing convex, increasing concave) order is defined by denoting $\mu \lecx \nu$ (resp.\ $\mu \lecv \nu$, $\mu \leicx \nu$ $\mu \leicv \nu$) if \eqref{eq:IntegralOrder} holds for all convex (resp.\ concave, increasing convex, increasing concave) functions $\phi$. For random numbers $X$ and $Y$ distributed according to $\mu$ and $\nu$, we denote $X \lest Y$ if $\mu \lest \nu$, and similarly for other integral stochastic orders.

When $X \lest Y$ we say that $X$ is smaller than $Y$ in the strong order because then $\pr(X > t) \le \pr(Y > t)$ for all $t$. When $X \lecx Y$ we say that $X$ is less variable than $Y$ in the convex order, because then $\E X = \E Y$ and $\var(X) \le \var(Y)$ whenever the second moments exist. Note that $X \lecv Y$ if and only if $X \gecx Y$, that is, $X$ is less concentrated than $Y$. The order $X \leicv Y$ can be interpreted by saying that $X$ is smaller and less concentrated than $Y$.

\subsection{Stochastic ordering and branching processes}

To obtain sharp results for branching processes, it is useful to introduce one more integral stochastic order. 
For probability distributions $\mu$ and $\nu$ on $\R_+$ (or a subset thereof), the \emph{Laplace transform order} is defined by denoting $\mu \leLt \nu$ if \eqref{eq:IntegralOrder} holds for all functions $\phi$ of the form $\phi(x) = - e^{-tx}$ with $t \ge 0$. Observe that $\mu \leLt \nu$ is equivalent to requiring $L_\mu(t) \ge L_\nu(t)$ for all $t \ge 0$, where
we denote the Laplace transform of $\mu$ by
$
 L_\mu(t) = \int e^{-t x} \mu(dx).
$
For probability distributions $p$ and $q$ on $\Z_+$, observe that $p \leLt q$ if and only if their generating functions are ordered by $G_p(s) \ge G_q(s)$ for all $s \in [0,1]$. Because for any $t \ge 0$, the function $x \mapsto - e^{-tx}$ is increasing and concave, it follows that
\[
 \mu \lest \nu \implies \mu \leicv \nu \implies \mu \leLt \nu.
\]
Due to the above implications we may interpret $X \leLt Y$ as $X$ being smaller and less concentrated than $Y$ (in a weaker sense than $X \leicv Y$).

The following elementary result confirms an intuitive fact that a branching population with a smaller and more variable offspring distribution is less likely to survive in the long run. The proof can be obtained as a special case of a slightly more general result below (Lemma~\ref{the:ExtOrder}).

\begin{proposition} 
\label{OE}
When $p \leLt q$, the survival probabilities defined by \eqref{eq:Survival}
are ordered according to $\zeta(p) \le \zeta(q)$. Especially,
\[
 p \lest q \ \text{or} \ p \lecv q
 \ \implies \ p \leicv q
 \ \implies \ p \leLt q
 \ \implies \ \zeta(p) \le \zeta(q).
\]
\end{proposition}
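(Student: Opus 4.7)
The plan is to unpack the chain of implications from the easy end, since all but the last step are bookkeeping on integral stochastic orders; only the final implication $p \leLt q \implies \zeta(p) \le \zeta(q)$ carries real content.

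The three easy steps come from containments between classes of test functions. The class of increasing concave functions on $\R_+$ is contained both in the class of increasing functions and in the class of concave functions, so the integral inequality defining $p \lest q$ or $p \lecv q$ specializes to the inequality defining $p \leicv q$. The excerpt has already recorded that $x \mapsto -e^{-tx}$ is increasing and concave for each $t \ge 0$, so applying $p \leicv q$ to these test functions yields $L_p(t) \ge L_q(t)$ for all $t \ge 0$, which is exactly $p \leLt q$.

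For the substantive step I would translate the Laplace transform order into the pointwise generating-function inequality $G_p(s) \ge G_q(s)$ on $[0,1]$ (noted in the excerpt for $\Z_+$-valued distributions). Set $s^{*} = \eta(p)$, so that $G_p(s^{*}) = s^{*}$ by the definition of $\eta$; then
\[
 G_q(s^{*}) \wle G_p(s^{*}) \weq s^{*}.
\]
To finish I would invoke the classical dichotomy for generating functions of $\Z_+$-valued distributions: because $G_q$ is nondecreasing and convex on $[0,1]$ with $G_q(1) = 1$, the shifted function $s \mapsto G_q(s) - s$ is convex and nonnegative at $s = 0$, so its sublevel set $\{s \in [0,1] : G_q(s) \le s\}$ is an interval containing $1$, and that interval is exactly $[\eta(q),1]$; equivalently, $G_q(s) > s$ on $[0,\eta(q))$ and $G_q(s) \le s$ on $[\eta(q),1]$. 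Hence $G_q(s^{*}) \le s^{*}$ forces $s^{*} \ge \eta(q)$, i.e.\ $\eta(p) \ge \eta(q)$, which by \eqref{eq:Survival} is precisely $\zeta(p) \le \zeta(q)$. The only step that requires any care is this last dichotomy, but it is classical; I do not expect any genuine obstacle.
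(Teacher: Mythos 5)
Your proof is correct and follows essentially the same route as the paper, which deduces the final implication from Lemma~\ref{the:ExtOrder} ($G_p \ge G_q$ on $[0,\eta(q)]$ implies $\eta(p)\ge\eta(q)$). The only cosmetic difference is that you evaluate the inequality at the single point $\eta(p)$ and invoke the convexity dichotomy for $G_q$, whereas the paper shows $G_p(s)\ge G_q(s)>s$ on $[0,\eta(q))$ so that $G_p$ has no fixed point there; both are the same elementary fixed-point comparison.
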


\section{Stochastic ordering of the configuration model}
\label{the:OrderinCM}
Basic intuition about standard branching processes, as confirmed by Proposition~\ref{OE}, suggests that a large configuration model  with a smaller and more variable degree distribution should have a smaller giant component. The next subsection displays a counterexample where this intuitive reasoning fails.

\subsection{A counterexample}
\label{exa:Special}

Consider degree distributions $p$ and $q$ defined by
\begin{align*}
 p &= \frac{1}{8} \delta_1 + \frac{6}{8} \delta_2 + \frac{1}{8} \delta_3, \\
 q &= \frac{1}{16} \delta_0 + \frac{1}{8} \delta_1 + \frac{5}{8} \delta_2 + \frac{1}{8} \delta_3 + \frac{1}{16} \delta_4,
\end{align*}
where $\delta_k$ represents the Dirac point mass at point $k$. Their downshifted size biasings, computed using \eqref{eq:DownshiftedSizebiasing}, are given by
\begin{align*}
 p^\circ &= \frac{1}{16} \delta_0 + \frac{12}{16} \delta_1 + \frac{3}{16} \delta_2, \\
 q^\circ &= \frac{1}{16} \delta_0 + \frac{10}{16} \delta_1 + \frac{3}{16} \delta_2 + \frac{2}{16} \delta_3.
\end{align*}

By comparing integrals of cumulative distributions functions \cite[Thm 3.A.1]{Shaked_Shanthikumar_2007} or by constructing a martingale coupling \cite{Leskela_Vihola_2014-04}, it is not hard to verify that in this case $p \lecx q$. Numerically computed values for the associated means, variances, and extinction probabilities are listed in Table~\ref{tab:1}.
\begin{table}
\begin{center}
\begin{tabular}{ c || c | c || c | c }
 & $p$ & $q$ & $p^\circ$ & $q^\circ$ \\ \hline
mean & 2.000 & 2.000 & 1.125 & 1.375 \\ 
variance & 0.250 & 0.750 & 0.234 & 0.609 \\ 
extinction probability $\eta$ & 0.000 & 0.076 & 0.333 & 0.186 \\
\hline
\end{tabular}
\end{center}
\caption{Statistical indices associated to $p$ and $q$ and their downshifted size biasings. \label{tab:1}}
\end{table}
By evaluating the associated generating functions at $\eta(p^\circ) = 0.333$ and $\eta(q^\circ) = 0.186$, we find using \eqref{eq:zetas} that $\zetas(p) = 0.870$ and $\zetas(q) = 0.892$.

This example shows that a standard branching process with a less variable offspring distribution ($p \lecx q$) is less likely to go extinct ($\eta(p) < \eta(q)$), but the same is not true for the downshifted size-biased offspring distributions ($\eta(p^\circ) > \eta(q^\circ))$. As a consequence, the giant component of a large random graph corresponding to a configuration model with limiting degree distribution $p$ is with high probability smaller than the giant component in a similar model with limiting degree distribution $q$, even though $p$ is less variable than $q$. The reason for this is that, even though higher variability has an unfavorable effect on standard branching (the immediate neighborhood of the root note), higher variability also causes the neighbors of a neighbor to have bigger degrees on average.

\subsection{A monotonicity result when one extinction probability is small}
\label{the:monotonicity}
The following result shows that increasing the variability of a degree distribution $p$ \emph{does} decrease the limiting relative size of a giant component, under the extra conditions that $p(0) = q(0)$ and that the extinction probability related to $q^\circ$ is less than $e^{-2} \approx 0.135$. Note that in the analysis of configuration models it is often natural to assume that $p(0) = q(0)$ because nodes without any half-links have no effect on large components.

\begin{theorem}
\label{the:OrderingSmallExtinction}
Assume that $p \leicv q$, $p(0) = q(0)$, and $\eta(q^\circ) \le e^{-2}$. Then $\zetas(p) \le \zetas(q)$.
\end{theorem}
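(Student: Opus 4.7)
The plan is to reduce $\zetas(p) \le \zetas(q)$ to the comparison of extinction probabilities $\eta(p^\circ) \ge \eta(q^\circ)$, and then to exploit the bound $\eta(q^\circ) \le e^{-2}$ to establish that comparison. First I would apply the hypothesis $p \leicv q$ to the increasing concave test functions $\phi(x) = -e^{-tx}$ to obtain the Laplace-transform ordering $p \leLt q$, equivalently $G_p(s) \ge G_q(s)$ for all $s \in [0,1]$. The linear test $\phi(x) = x$ gives $m_1(p) \le m_1(q)$, and the hypothesis $p(0) = q(0)$ yields $G_p(0) = G_q(0)$, so $G_p - G_q$ is a nonnegative function on $[0,1]$ vanishing at both endpoints; expanding near zero forces $p(1) \ge q(1)$, which together with $m_1(q) \ge m_1(p)$ gives $p^\circ(0) = p(1)/m_1(p) \ge q(1)/m_1(q) = q^\circ(0)$.

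The key reduction is that if $\eta(p^\circ) \ge \eta(q^\circ)$, then by monotonicity of $G_p$ and the pointwise inequality $G_p \ge G_q$,
\[
 \zetas(p) \weq 1 - G_p(\eta(p^\circ)) \wle 1 - G_p(\eta(q^\circ)) \wle 1 - G_q(\eta(q^\circ)) \weq \zetas(q),
\]
so it suffices to prove $\eta(p^\circ) \ge \eta(q^\circ)$. If $p^\circ$ is not supercritical, $\eta(p^\circ) = 1$ and this is immediate; otherwise, the convexity of $G_{p^\circ}$ together with the fact that $\eta(p^\circ)$ is its smallest fixed point in $[0,1)$ makes $\eta(p^\circ) \ge \eta(q^\circ)$ equivalent to $G_{p^\circ}(\eta(q^\circ)) \ge \eta(q^\circ) = G_{q^\circ}(\eta(q^\circ))$. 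Setting $F(s) := m_1(q) G_p(s) - m_1(p) G_q(s)$, this last inequality is precisely $F'(\eta(q^\circ)) \ge 0$, and from the preceding paragraph one verifies $F \ge 0$ on $[0,1]$ (multiply $G_p \ge G_q \ge 0$ by positive factors and use $m_1(q) \ge m_1(p)$), $F'(0) \ge 0$ (equivalent to $p^\circ(0) \ge q^\circ(0)$), and $F'(1) = 0$ (by direct computation).

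The hard part is the final step: turning the global nonnegativity $F \ge 0$ and the soft boundary data $F'(0) \ge 0$, $F'(1) = 0$ into the localized pointwise derivative estimate $F'(\eta(q^\circ)) \ge 0$. This is exactly where the specific threshold $e^{-2}$ must enter. My plan is to quantify the possible negative excursion of $F'$ on an initial subinterval $[0,s]$ in terms of the nonnegativity of $F$ and the nonnegative second derivatives $G_p''$ and $G_q''$, and to show that any such excursion forces the first zero of $F'$ to occur beyond $s = e^{-2}$. I expect the extremal configurations for this estimate to concentrate on two-point degree distributions, with the constant $e^{-2}$ emerging from an explicit optimization such as the maximum of $k^2 s^{k-1}$ or $k(1-s)s^k$ over $k \ge 1$, where $e^{-2}$ is a natural critical abscissa. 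This quantitative derivative estimate is the only place where the specific numerical value $e^{-2}$ plays a role, and it is the crux of the proof.
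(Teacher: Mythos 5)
Your reduction is correct and matches the paper's: it suffices to show $\eta(p^\circ)\ge\eta(q^\circ)$, which in turn follows from the generating-function inequality $G_{p^\circ}(s)\ge G_{q^\circ}(s)$ on $[0,\eta(q^\circ)]$ (the paper's Lemma~\ref{the:ExtOrder}), and the final chain $G_p(\eta(p^\circ))\ge G_p(\eta(q^\circ))\ge G_q(\eta(q^\circ))$ is exactly the paper's closing step. The preliminary facts you extract ($p\leLt q$, $m_1(p)\le m_1(q)$, $p(1)\ge q(1)$, hence $F\ge 0$, $F'(0)\ge 0$, $F'(1)=0$) are all valid.

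The genuine gap is the step you yourself flag as the crux: you never prove $F'(\eta(q^\circ))\ge 0$, i.e.\ $G_{p^\circ}\ge G_{q^\circ}$ on $[0,e^{-2}]$, and the route you sketch for it is unlikely to close. By that point you have discarded the hypothesis $p\leicv q$, retaining only its consequences for exponential and linear test functions; an ``excursion'' argument from $F\ge 0$ plus the boundary data $F'(0)\ge 0$, $F'(1)=0$ does not see the full increasing concave order, and there is no reason the sign of $F'$ at an interior point is determined by that data alone (note also that $F''=m_1(q)G_p''-m_1(p)G_q''$ has no definite sign, so the ``nonnegative second derivatives'' give you little leverage). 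The paper's Lemma~\ref{the:sbgf} instead applies $p\leicv q$ \emph{directly} to the test function $\phi_s(x)=xs^x$: writing $t=-\log s$, one has $\phi_s'(x)=(1-tx)e^{-tx}$ and $\phi_s''(x)=(tx-2)te^{-tx}$, so $\phi_s$ is decreasing and convex on $[1,\infty)$ precisely when $t\ge 2$, i.e.\ $s\le e^{-2}$ --- that is where the constant comes from, close to but not the same as your guessed optimization. Since $\phi_s$ is not globally concave/monotone, one replaces it on $[0,1)$ by its tangent line at $x=1$ to get a globally decreasing convex $\psi_s$; the hypothesis $p(0)=q(0)$ is then used to swap $\psi_s(0)$ back for $\phi_s(0)$ in the resulting inequality, yielding $\E X s^X\ge \E Y s^Y$, i.e.\ $G_p'(s)\ge G_q'(s)$, which combined with $m_1(p)\le m_1(q)$ gives $G_{p^\circ}(s)\ge G_{q^\circ}(s)$. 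To complete your proof you need this test-function argument (or an equivalent use of the full $\leicv$ hypothesis at interior points), not just the endpoint data on $F$.
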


\begin{remark}
Assume that $q(1) > 0$ and that $\zetas(q) \ge 1 - q(0) - q(1) e^{-2}$. If this holds, then the inequality $G_q(s) \ge q(0) + q(1)s$ applied to $s = \eta(q^\circ)$ implies that
\[
 q(0) + q(1) e^{-2} \ge 1 - \zetas(q) = G_q(\eta(q^\circ)) \ge q(0) + q(1) \eta(q^\circ),
\]
so that $\eta(q^\circ) \le e^{-2}$ as required in Theorem~\ref{the:OrderingSmallExtinction}. 
\end{remark}

Theorem~\ref{the:OrderingSmallExtinction} is a direct consequence (choose $\ell=0$ below) of the following slightly more general result.

\begin{theorem}
\label{the:GenereralOrderingSmallExtinction}
Assume that $p \leicv q$, $p(i) = q(i)$ for all $i\in\{0,1,2,...,\ell\}$ and $\eta(q^\circ) \le e^{-2/(\ell+1)}$ for some integer $\ell\ge 0$. Then $\zetas(p) \le \zetas(q)$.
\end{theorem}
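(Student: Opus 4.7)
The reformulation $\zetas(r) = 1 - G_r(\eta(r^\circ))$ turns the target $\zetas(p) \le \zetas(q)$ into $G_p(\eta_p) \ge G_q(\eta_q)$, where I abbreviate $\eta_p = \eta(p^\circ)$ and $\eta_q = \eta(q^\circ)$. Since the map $x \mapsto -e^{-tx}$ is increasing and concave for every $t \ge 0$, the hypothesis $p \leicv q$ yields $p \leLt q$, i.e.\ $G_p(s) \ge G_q(s)$ for all $s \in [0,1]$. This pointwise comparison is the basic workhorse of the argument.

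The proof naturally splits according to the relative order of $\eta_p$ and $\eta_q$. If $\eta_p \ge \eta_q$, monotonicity of $G_p$ alone gives $G_p(\eta_p) \ge G_p(\eta_q) \ge G_q(\eta_q)$ in one line, using neither the partial-agreement hypothesis nor the smallness of $\eta_q$. The substantive case is $\eta_p < \eta_q$. Here I would rearrange the target as $h(\eta_q) \ge G_p(\eta_q) - G_p(\eta_p)$, where $h(s) := G_p(s) - G_q(s) \ge 0$. Because $p(i) = q(i)$ for $i \le \ell$, we have $h(s) = \sum_{k > \ell}(p(k) - q(k))s^k = s^{\ell+1}\tilde h(s)$ with $\tilde h(s) \ge 0$ on $(0,1]$. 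Moreover, the icv comparison at $t = \ell + 2$ forces $p(\ell+1) \ge q(\ell+1)$, so $\tilde h$ extends continuously to $[0,1]$ with $\tilde h(0) \ge 0$ as well.

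For the right-hand side I would use convexity of $G_p$ combined with the identity $G_p'(s) = m_1(p) G_{p^\circ}(s)$. Since $\eta_q$ lies strictly above the smallest fixed point of $G_{p^\circ}$ (we are in the hard case) and $G_{p^\circ}$ is convex with $G_{p^\circ}(1) = 1$, we have $G_{p^\circ}(\eta_q) < \eta_q$, whence
\[
  G_p(\eta_q) - G_p(\eta_p) \le G_p'(\eta_q)(\eta_q - \eta_p) < m_1(p)\,\eta_q\,(\eta_q - \eta_p).
\]
An upper bound on $\eta_q - \eta_p$ can be extracted from the tangent-line bound $G_{p^\circ}(\eta_q) \ge \eta_p + G_{p^\circ}'(\eta_p)(\eta_q - \eta_p)$ together with the identity $\eta_q - G_{p^\circ}(\eta_q) = G_{q^\circ}(\eta_q) - G_{p^\circ}(\eta_q)$, where the last difference is controlled by the fact that $p^\circ(i) \ge q^\circ(i)$ for $i \le \ell$ (a direct computation from the definition of $p^\circ$ using $p(i) = q(i)$ for $i \le \ell$ and $m_1(p) \le m_1(q)$). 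Substituting these estimates reduces the inequality to a scalar estimate involving $\eta_q$, $\ell$, and moments of $p,q$.

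The main obstacle will be the final scalar estimate: aligning constants so that the threshold comes out as the stated $e^{-2/(\ell+1)}$ and not something weaker. The exponent $\ell+1$ emerges transparently from the factor $\eta_q^{\ell+1}$ in $h(\eta_q)$, while the constant $2$ presumably arises from optimizing a one-variable expression (most plausibly of the form $u^{\ell+1}(1-u)^{-2}$ or similar), which becomes the binding constraint when combining the lower bound on $h(\eta_q)$ with the upper bound on $G_p(\eta_q) - G_p(\eta_p)$. If any of the intermediate bounds is slack, one only recovers a weaker constant, so producing the sharp threshold will require a careful choice of where to apply tangent-line and chord estimates to $G_p, G_{p^\circ}$ and $G_{q^\circ}$.
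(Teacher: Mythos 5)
Your reduction of the claim to $G_p(\eta(p^\circ)) \ge G_q(\eta(q^\circ))$, and your observation that $p \leicv q$ gives $G_p \ge G_q$ pointwise so that the case $\eta(p^\circ) \ge \eta(q^\circ)$ is immediate, reproduce exactly the final step of the paper's argument. The problem is that your ``hard case'' $\eta(p^\circ) < \eta(q^\circ)$ is where the entire content of the theorem sits, and you do not prove it: the chain of tangent-line and chord estimates terminates at ``reduces the inequality to a scalar estimate,'' and you explicitly concede that extracting the threshold $e^{-2/(\ell+1)}$ from that estimate is an open obstacle. As written this is a plan, not a proof, and the plan aims in a direction the paper does not need to go: under the stated hypotheses the hard case is vacuous.

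The missing idea is a direct comparison of the downshifted size-biased generating functions. Lemma~\ref{the:sbgf} shows that $p \leicv q$ together with $p(i)=q(i)$ for $i\le\ell$ forces $G_{p^\circ}(s) \ge G_{q^\circ}(s)$ for all $s\in[0,e^{-2/(\ell+1)}]$. The device is to test the increasing concave order against $-\psi_s$, where $\psi_s$ agrees with $\phi_s(x)=xs^x$ (whose expectation under $p$ is $sG_p'(s)$) on $[\ell+1,\infty)$ and is replaced by the tangent line at $x=\ell+1$ below that point; this modification is decreasing and convex globally precisely because $\phi_s$ is decreasing and convex on $[\ell+1,\infty)$ when $-\log s \ge 2/(\ell+1)$ --- that is where the constant $2$ and the exponent $\ell+1$ come from, not from optimizing an expression like $u^{\ell+1}(1-u)^{-2}$. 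The agreement of $p$ and $q$ on $\{0,\dots,\ell\}$ serves only to cancel the linearized part, after which one divides by the (ordered) means. Since $\eta(q^\circ)\le e^{-2/(\ell+1)}$ by hypothesis, Lemma~\ref{the:ExtOrder} then yields $\eta(p^\circ)\ge\eta(q^\circ)$ unconditionally, landing you in your easy case. Your auxiliary claims (that $p(\ell+1)\ge q(\ell+1)$ and that $p^\circ(i)\ge q^\circ(i)$ for $i\le\ell$) are correct, but they do not by themselves control $\eta(q^\circ)-\eta(p^\circ)$ or the difference $G_{q^\circ}(\eta(q^\circ))-G_{p^\circ}(\eta(q^\circ))$, whose higher-order terms have no sign; without the convex-modification argument the loop does not close.
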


The proof of Theorem~\ref{the:GenereralOrderingSmallExtinction} is based on the following two lemmas.

\begin{lemma}
\label{the:sbgf}
If $p \leicv q$ and  $p(i) = q(i)$ for all $i\in\{0,1,2,...,\ell\}$, then the generating functions of the downshifted size biasings of $p$ and $q$ are ordered by
\[
 G_{p^\circ}(s) \ge G_{q^\circ}(s) \quad \text{for all $s \in [0,e^{-2/(\ell+1)}]$}.
\]
\end{lemma}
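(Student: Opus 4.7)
The plan is to write $G_{p^\circ}(s) = G_p'(s)/m_1(p)$ as the ratio $\E_p[\phi_s(X)]/\E_p[X]$, where $\phi_s(x) = xs^{x-1}$ and $X \sim p$, and then to establish separately the two inequalities
\[
m_1(p) \wle m_1(q) \qquad\text{and}\qquad G_p'(s) \wge G_q'(s).
\]
Since $\phi_s \ge 0$ on $\R_+$ and both means are strictly positive, these two inequalities cross-multiply to give $G_{p^\circ}(s) \ge G_{q^\circ}(s)$.

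The mean inequality is immediate from $p \leicv q$ applied to the increasing (and affine, hence concave) function $\phi(x) = x$, so the substance of the argument lies in the second inequality. For this I would first analyze $\phi_s$ by elementary calculus. Writing $a = -\log s > 0$, one computes $\phi_s'(x) = e^a e^{-ax}(1 - ax)$ and $\phi_s''(x) = a e^a e^{-ax}(ax - 2)$, so $\phi_s$ is decreasing and convex on $[2/a, \infty)$; equivalently, $-\phi_s$ is increasing and concave there. The hypothesis $s \le e^{-2/(\ell+1)}$ translates to $a \ge 2/(\ell+1)$, i.e.\ $2/a \le \ell+1$, so the convex--decreasing regime of $\phi_s$ contains every integer $k \ge \ell+1$. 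This is precisely where the threshold in the hypothesis enters.

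Next I would exploit the agreement $p(i) = q(i)$ on $\{0, 1, \dots, \ell\}$. Set $c = 1 - \sum_{i=0}^\ell p(i) = 1 - \sum_{i=0}^\ell q(i)$, and let $\tilde p, \tilde q$ denote the conditional distributions of $p, q$ on $\{\ell+1, \ell+2, \dots\}$ (the case $c = 0$ forces $p = q$, in which case the conclusion is trivial). For any $\psi$ on $\Z_+$,
\[
\E_p[\psi(X)] - \E_q[\psi(Y)] \weq c\bigl(\E_{\tilde p}[\psi] - \E_{\tilde q}[\psi]\bigr),
\]
so the order $p \leicv q$ transfers directly into $\tilde p \leicv \tilde q$. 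I would then extend $-\phi_s|_{[\ell+1,\infty)}$ to an increasing concave function on all of $\R_+$ (for instance, by continuing tangentially and leftwards from the point $\ell+1$, giving a linear piece whose slope matches $-\phi_s'(\ell+1) \ge 0$ and is then dominated by the decreasing slopes of $-\phi_s$ on $[\ell+1, \infty)$). Applying $\tilde p \leicv \tilde q$ to this extension and undoing the conditioning yields $\E_p[\phi_s(X)] \ge \E_q[\phi_s(Y)]$, that is, $G_p'(s) \ge G_q'(s)$.

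The main obstacle I anticipate is the calculus step: locating the convex--decreasing regime of $\phi_s$ sharply enough that the specific threshold $2/(\ell+1)$ emerges naturally from the hypothesis, and verifying that $-\phi_s$ admits a globally concave increasing extension from $[\ell+1, \infty)$. Once these two ingredients are in place, the transfer of $\leicv$ to the conditional tails and the final combination $G_p'(s) m_1(q) \ge G_q'(s) m_1(p)$ are routine.
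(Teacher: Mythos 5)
Your proposal is correct and follows essentially the same route as the paper: both rest on the calculus observation that $\phi_s$ is decreasing and convex on $[2/(-\log s),\infty)\supset[\ell+1,\infty)$, a tangent-line extension making $-\phi_s$ increasing and concave on all of $\R_+$, cancellation of the agreeing masses on $\{0,\dots,\ell\}$ (which you package as a transfer of $\leicv$ to the conditional tail distributions, the paper does it by direct term cancellation), and the mean inequality $m_1(p)\le m_1(q)$. The only cosmetic point is the endpoint $s=0$, which your parametrization $a=-\log s>0$ excludes and which the paper dispatches by continuity of the generating functions.
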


\begin{proof}
Fix $s\in(0,e^{-2/(\ell+1)}]$. Define a function $\phi_s: \R_+ \to \R_+$ by
\[\phi_s(x) = x s^x.\]
Denote $t = -\log s$, so that $t \in [2/(\ell+1), \infty)$. Because $\phi_s'(x) = (1-tx) e^{-tx}$ and $\phi_s''(x) = (t x-2)t e^{-t x}$, we find that $\phi_s$ is decreasing on $[\frac{1}{-\log s},\infty)$ and convex on $[\frac{2}{-\log s},\infty)$. Because $-\log s\ge \frac{2}{\ell +1}$, it follows that $\phi_s$ is decreasing and convex on $[\ell +1,\infty).$ Define a decreasing convex function by
\begin{align}
\label{the:DecConvex}
\psi_s(x) = f_s(x) 1_{[0,x_0)}+\phi_s(x) 1_{[x_0,\infty)}(x)
\end{align}
where $f_s(x)=\phi_s(x_0)+\phi_s'(x_0)(x-x_0)$ and $x_0=\ell+1$ (see Figure \ref{fig:ConvexExtension}).

Let $X$ and $Y$ be random integers distributed according to $p$ and $q$. By assumption $p\leicv q$,
\[
 \E\big(\psi_s(X)\big)\ge \E\big(\psi_s(Y)\big).
\]
By the second assumption $p(i) = q(i)$ for all  $i\in\{0,1,2,...,\ell\}$, the above inequality can also be written as
\[
 \sum_{i=\ell +1}^\infty\psi_s(i)p(i)\ge\sum_{i=\ell+1}^\infty\psi_s(i)q(i),
\]
and hence by \eqref{the:DecConvex} we find that
\[
 \sum_{i=\ell +1}^\infty\phi_s(i)p(i)\ge \sum_{i=\ell+1}^\infty\phi_s(i)q(i).
\]
By applying again the second assumption we obtain
\[
 \E\big(\phi_s(X)\big)\ge\E\big(\phi_s(Y)\big),
\]
which implies
$
 G_p'(s)\ge G_q'(s).
$
Because $p\leicv q$ also implies $\E(X)\le \E(Y)$, we see by \eqref{eq:GeneratingFunctions} that
\begin{align}
 \label{ineq}
 G_{p^{\circ}}(s)=\frac{G_p'(s)}{\E(X)}\ge \frac{G_q'(s)}{\E(Y)}=G_{q^{\circ}}(s).
\end{align} 

Hence the claim is true for all $s\in(0,e^{-2/(\ell+1)}]$. By the continuity of $G_{p^\circ}$ and $G_{q^\circ}$, the claim is also true for $s=0$.
\end{proof}
  
\begin{figure}[h]
\begin{center}
\includegraphics[height=.4\textwidth]{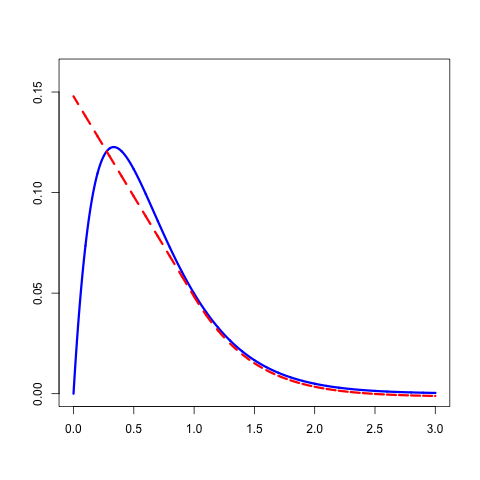}
\end{center}
\caption{\label{fig:ConvexExtension} Function $\phi_s$ (blue) and its convex modification $\psi_s$ (red) for $t=3$.}
\end{figure}

\begin{lemma}
\label{the:ExtOrder}
If $G_p(s) \ge G_q(s)$ for all $s \in [0,\eta(q)]$, then $\eta(p) \ge \eta(q)$.
\end{lemma}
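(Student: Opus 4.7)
\textbf{Proof plan for Lemma \ref{the:ExtOrder}.} The plan is to argue by contradiction, exploiting the fact that the smallest fixed point of a generating function is actually attained and that strictly below it the generating function lies strictly above the diagonal. First I would dispose of the trivial case $\eta(q)=0$, in which the conclusion $\eta(p)\ge 0=\eta(q)$ is immediate. So assume $\eta(q)>0$.

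The key preliminary observation is that $G_q(s)>s$ for every $s\in[0,\eta(q))$. To see this, note that $G_q$ is continuous on $[0,1]$, so $\{s\in[0,1]:G_q(s)=s\}$ is closed, whence $\eta(q)$ is itself a fixed point and, by minimality, no fixed point lies in $[0,\eta(q))$. The continuous function $s\mapsto G_q(s)-s$ is therefore of constant sign on $[0,\eta(q))$. At $s=0$ it equals $q(0)$; if $q(0)$ were $0$ then $0$ would be a fixed point, contradicting $\eta(q)>0$. Hence $q(0)>0$ and consequently $G_q(s)-s>0$ throughout $[0,\eta(q))$.

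Now suppose, for contradiction, that $\eta(p)<\eta(q)$. Since the fixed point set of $G_p$ is closed, $G_p(\eta(p))=\eta(p)$. Because $\eta(p)\in[0,\eta(q))\subseteq[0,1]$, the hypothesis of the lemma gives $G_p(\eta(p))\ge G_q(\eta(p))$, and by the preliminary observation $G_q(\eta(p))>\eta(p)$. Chaining these inequalities yields
\[
 \eta(p)\weq G_p(\eta(p))\wge G_q(\eta(p))\wges \eta(p),
\]
which is absurd. Therefore $\eta(p)\ge\eta(q)$, as claimed.

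I do not anticipate any real obstacle: the only subtlety is making sure the smallest fixed point is attained (needed to write $G_p(\eta(p))=\eta(p)$) and that strict inequality $G_q(s)>s$ holds on $[0,\eta(q))$ so the contradiction is strict. Both are handled by continuity of generating functions on $[0,1]$ together with the definition of $\eta$ as an infimum, plus the small observation that $\eta(q)>0$ forces $q(0)>0$.
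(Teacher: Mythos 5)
Your proof is correct and follows essentially the same route as the paper: both arguments hinge on showing $G_q(s)>s$ on $[0,\eta(q))$ (via $q(0)>0$ and continuity of $s\mapsto G_q(s)-s$) and then combining this with the hypothesis $G_p\ge G_q$ to conclude $G_p$ has no fixed point below $\eta(q)$. Your contradiction framing versus the paper's direct statement is only a cosmetic difference.
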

\begin{proof}
The claim is trivial for $\eta(q) = 0$, so let us assume that $\eta(q) > 0$. Then $G_q(0) > 0$, and the continuity of $s \mapsto G_q(s)-s$ implies that $G_q(s) > s$ for all $s \in [0,\eta(q))$. Hence also
\[
 G_p(s) \ge G_q(s) > s
\]
for all $s \in [0,\eta(q))$. This shows that $G_p$ has no fixed points in $[0,\eta(q))$ and therefore $\eta(p)$, the smallest fixed point of $G_p$ in $[0,1]$, must be greater than or equal to $\eta(q)$.
\end{proof}

\begin{proof}[Proof of Theorem \ref{the:GenereralOrderingSmallExtinction}]
By applying Lemma \ref{the:sbgf} we see that
\begin{equation}
 \label{eq:sbOrder}
 G_{p^\circ}(s) \ge G_{q^\circ}(s)
\end{equation}
for all $s \in [0,e^{-2/(\ell+1)}]$. The assumption $\eta(q^\circ) \le e^{-2/(\ell+1)}$ further guarantees that \eqref{eq:sbOrder} is true for all $s \in [0,\eta(q^\circ)]$. Lemma~\ref{the:ExtOrder} then shows that $\eta(p^\circ) \ge \eta(q^\circ)$. Finally, $p \leicv q$ implies $p \leLt q$, so that $G_p(s) \ge G_q(s)$ for all $s \in [0,1]$. Therefore, the monotonicity of $G_p$ implies that
\[
 G_p(\eta(p^\circ)) \ge G_p(\eta(q^\circ)) \ge G_q(\eta(q^\circ)).
\]
By substituting the above inequality into \eqref{eq:zetas}, we obtain Theorem~\ref{the:GenereralOrderingSmallExtinction}.
\end{proof}

\subsection{Application: Social network modeling}
\label{sec:Social}

Consider a large online social network of mean degree $\lambda_0$ where users forward copies of messages to their neighbors independently of each other with probability $r_0$. Without any a priori information about the higher order statistics of the degree distribution, one might choose to model the network using a configuration model with some degree distribution which is similar to one observed in some known social network. Because several well-studied social networks data exhibit a power-law tail in their degree data, a natural first choice is to model the unknown network using a configuration model with a Pareto-mixed Poisson limiting degree distribution (recall~\eqref{eq:ParPoi} and Table~\ref{tab:SizeBiasing})
\begin{equation}
 \label{eq:NetworkStructure}
 p_0 = \MPoi( \Par(\alpha, c_0 ))
\end{equation}
with shape $\alpha > 1$, scale $c_0 = \lambda_0(1-1/\alpha)$, and mean $\lambda_0$.

Because the above choice of degree distribution was made without regard to network data, it is important to try to analyze how big impact can a wrong choice make to key network characteristics. When interested in global effects on information spreading, it is natural to consider the epidemic generated graph obtained by deleting stubs of the initial configuration model independently with probability $1-r_0$. The outcome corresponds to another configuration model where the limiting degree $p = T_{r_0} p_0$ is the $r_0$-thinning of $p_0$  defined in Section~\ref{sec:Thinning}. Using generating functions one may verify that the $r$-thinning of a $\mu$-mixed Poisson distribution $\MPoi(\mu)$ equals $\MPoi(r\mu)$, where $r \mu$ denotes the distribution of a $\mu$-distributed random number multiplied by $r \in [0,1]$. Because $r \Par(\alpha,c) = \Par(\alpha,r c)$, it follows that the Pareto-mixed Poisson distribution is scale-free in the sense that
\[
 T_r \MPoi(\Par(\alpha,c)) = \MPoi(\Par(\alpha,r c)).
\]
See \cite{Arratia_Liggett_Williamson_2014} for an insightful discussion on scale-free properties of discrete probability distributions. As a consequence, the $r_0$-thinning of $p_0$ in \eqref{eq:NetworkStructure} equals
\begin{equation}
 \label{eq:Case}
 p = \MPoi( \Par(\alpha,\lambda(1-1/\alpha)) )
\end{equation}
with $\lambda = \lambda_0 r_0$.

Now the key quantity describing the information spreading dynamics of the social network model is given by $\zetas(p)$ defined in \eqref{eq:zetas}. To study how sensitive this functional is to the variability of $p$, we have numerically evaluated $\zetas(p)$ for different values of $\alpha$ and $\lambda$, see Fig.~\ref{fig:ParetoMixedPoisson}. An extreme case is obtained by letting $\alpha \to \infty$ which leads to the standard Poisson distribution with mean $\lambda$. Again, perhaps a bit surprisingly, we see that for small values of $\lambda$, a Pareto-mixed Poisson as a limiting degree distribution may produce an asymptotically larger maximally connected component in a configuration model than a one with a less variable unmixed Poisson distribution with the same mean. This phenomenon is most clearly visible when $\lambda=0.9$, in which case $\zetas(\Poi(\lambda)) = \zeta(\Poi(\lambda)) = 0$ but a Pareto-mixed Poisson degree distribution with a heavy enough tail yields nonzero values of $\zetas$, as shown by the magenta curve in Fig.~\ref{fig:ParetoMixedPoisson}. For sufficiently large values of $\lambda$, this phenomenon appears not to take place.

\begin{figure}[h]
\begin{center}
\includegraphics[height=.65\textwidth]{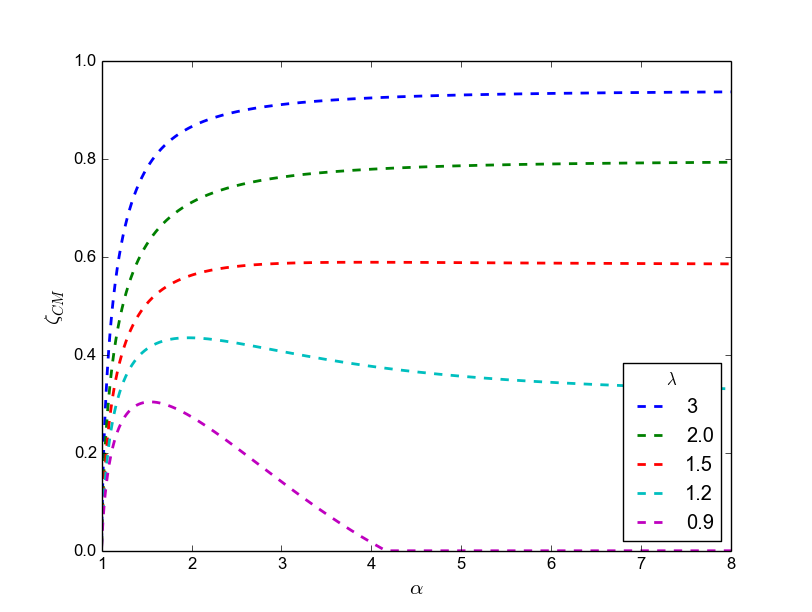}
\end{center}
\caption{\label{fig:ParetoMixedPoisson} Configuration model branching functional $\zetas(p)$ for a collection of Pareto-mixed Poisson degree distributions with mean $\lambda$, plotted as a function of the shape parameter $\alpha>1$.}
\end{figure}

Proving the monotonicity of $\zetas(p)$ for Pareto-mixed Poisson distributions of the form \eqref{eq:Case} is not directly possible using Theorem~\ref{the:OrderingSmallExtinction} because $p(0)$ is not constant with respect to the shape parameter $\alpha$. However, the following result can be applied here. Let us define a constant
\[
 \lambdacr = \inf\{\lambda \ge 0: \lambda \zeta(\Poi(\lambda)) = 2\}.
\]
Because $\lambda \mapsto \lambda \zeta(\Poi(\lambda))$ is increasing (Proposition~\ref{OE}) and continuous (Lemma~\ref{the:ContinuityOfExtinctionProbabilities}) and grows from zero to infinity as $\lambda$ ranges from zero to infinity, it follows that $\lambdacr \in (2,\infty)$ is well defined. Numerical computations indicate that $\lambdacr \approx 2.3$. The following result establishes a monotonicity result for the configuration model with a Pareto-mixed Poisson limiting distribution $p_\alpha = \MPoi(\mu_\alpha)$ with $\mu_\alpha = \Par(\alpha,c_\alpha)$, where the scale $c_\alpha = \lambda(1-1/\alpha)$ is chosen so that the mean of $p_\alpha$ equals $\lambda$ for all $\alpha > 0$ (recall Table~\ref{tab:SizeBiasing}).

\begin{theorem}
\label{the:main2}
For any $\lambda > \lambdacr$ there exists a constant $\alphacr > 1$ such that
\[
 \zetas(p_\alpha) \le \zetas(p_\beta) \le \zetas(\Poi(\lambda))
\]
for all $\alphacr \le \alpha \le \beta$.
\end{theorem}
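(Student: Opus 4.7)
The plan is to combine three ingredients: a Wasserstein continuity argument near $\alpha=\infty$ to pin down $\alphacr$, a convex-ordering argument for the distributions $p_\alpha$ themselves, and a local Laplace-transform comparison for their downshifted size biasings $p_\alpha^\circ=\MPoi(\Par(\alpha-1,c_\alpha))$.

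First I would check that $\mu_\alpha=\Par(\alpha,c_\alpha)\Wto\delta_\lambda$: convergence in distribution is immediate because $c_\alpha\to\lambda$ and the shape tends to infinity, and every $\mu_\alpha$ has mean $\lambda$. This lifts to $p_\alpha\Wto\Poi(\lambda)$ via $G_{p_\alpha}(s)=L_{\mu_\alpha}(1-s)$, so Lemma~\ref{the:BiasedConvergence}, Lemma~\ref{the:ContinuityOfExtinctionProbabilities} and Theorem~\ref{the:ContinuityCM} give $\eta(p_\alpha^\circ)\to\eta(\Poi(\lambda))$ and $\zetas(p_\alpha)\to\zetas(\Poi(\lambda))$. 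Since $\lambda\zeta(\Poi(\lambda))=-\log\eta(\Poi(\lambda))$, the hypothesis $\lambda>\lambdacr$ forces $\eta(\Poi(\lambda))<e^{-2}$, and I would then choose $\alphacr>1$ large enough that $\eta(p_\alpha^\circ)\le e^{-2}$ for every $\alpha\ge\alphacr$.

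Next, for $1<\alpha<\beta$ the CDFs of $\mu_\alpha$ and $\mu_\beta$ cross exactly once: the smaller scale $c_\alpha<c_\beta$ raises $F_{\mu_\alpha}$ above $F_{\mu_\beta}$ on the body, while the smaller shape makes the tail of $\mu_\alpha$ heavier, so eventually $F_{\mu_\alpha}<F_{\mu_\beta}$. Since both distributions share the mean $\lambda$, Ohlin's single-crossing criterion delivers $\mu_\beta\lecx\mu_\alpha$; because mixed Poisson preserves the convex order, $p_\beta\lecx p_\alpha$. Testing against $\phi(k)=s^k$ yields the pointwise bound $G_{p_\beta}(s)\le G_{p_\alpha}(s)$ on $[0,1]$. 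For the downshifted size biasings, using $\Par(\alpha,c)^*=\Par(\alpha-1,c)$ (Example~\ref{exa:Pareto}) and $G_{p_\alpha^\circ}(s)=L_{\mu_\alpha^*}(1-s)$, the key comparison becomes
\[
 L_{\mu_\alpha^*}(u)\ \ge\ L_{\mu_\beta^*}(u), \qquad u\in [1-e^{-2},1].
\]
I would establish this by working with the closed form $L_{\Par(\alpha-1,c_\alpha)}(u)=(\alpha-1)(c_\alpha u)^{\alpha-1}\Gamma(1-\alpha,c_\alpha u)$, differentiating in $\alpha$ and simplifying via the recurrence $\Gamma(a+1,x)=a\Gamma(a,x)+x^a e^{-x}$; after enlarging $\alphacr$ if needed, the derivative should be nonpositive on the stated range, because for $u$ close to $1$ the integral concentrates near the endpoint $c_\alpha$ and the smaller scale overpowers the heavier tail of $\mu_\alpha^*$.

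Once the Laplace inequality is in hand, Lemma~\ref{the:ExtOrder} gives $\eta(p_\alpha^\circ)\ge\eta(p_\beta^\circ)$, and combining this with monotonicity of $G_{p_\alpha}$ and the pointwise bound $G_{p_\beta}\le G_{p_\alpha}$ delivers
\[
 G_{p_\alpha}(\eta(p_\alpha^\circ))\ \ge\ G_{p_\alpha}(\eta(p_\beta^\circ))\ \ge\ G_{p_\beta}(\eta(p_\beta^\circ)),
\]
hence $\zetas(p_\alpha)\le\zetas(p_\beta)$; sending $\beta\to\infty$ and invoking the continuity from the first paragraph then yields $\zetas(p_\beta)\le\zetas(\Poi(\lambda))$. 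The main obstacle is the Laplace-transform comparison above: the size-biased means $m_1(\mu_\alpha^*)=\lambda(\alpha-1)^2/[\alpha(\alpha-2)]$ strictly decrease in $\alpha$, so $L_{\mu_\alpha^*}$ and $L_{\mu_\beta^*}$ necessarily cross somewhere on $(0,1)$, and the genuine work is to prove that this crossing stays strictly below $1-e^{-2}$ once $\alpha\ge\alphacr$, so that Lemma~\ref{the:ExtOrder} applies on the full interval $[0,\eta(p_\beta^\circ)]$.
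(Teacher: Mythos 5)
Your overall architecture matches the paper's: use convergence $p_\alpha^\circ\to\Poi(\lambda)$ and continuity of extinction probabilities to choose $\alphacr$, order the generating functions $G_{p_\alpha}\ge G_{p_\beta}\ge G_{\Poi(\lambda)}$ via a convex/concave ordering of the mixing Pareto laws, obtain a pointwise ordering of $G_{p_\alpha^\circ}$ and $G_{p_\beta^\circ}$ near the origin, and then invoke Lemma~\ref{the:ExtOrder} and monotonicity of $G_{p_\alpha}$. However, the central technical step --- the inequality $L_{\mu_\alpha^*}(u)\ge L_{\mu_\beta^*}(u)$ on the range needed for Lemma~\ref{the:ExtOrder}, equivalently $G_{p_\alpha^\circ}(s)\ge G_{p_\beta^\circ}(s)$ on $[0,\eta(p_\beta^\circ)]$ --- is not proved. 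Your proposed route (differentiating the incomplete-gamma closed form of $L_{\Par(\alpha-1,c_\alpha)}$ in $\alpha$ and hoping the derivative is nonpositive for $u$ near $1$, uniformly in $\alpha\ge\alphacr$) is only a sketch, and you yourself flag it as the main obstacle. Since the two Laplace transforms genuinely cross (as you note, the size-biased means are not equal), locating the crossing uniformly in $(\alpha,\beta)$ is exactly the hard part, and without it the chain $\eta(p_\alpha^\circ)\ge\eta(p_\beta^\circ)\ge\eta(\Poi(\lambda))$ does not follow.

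The paper closes this gap without any special-function computation. By Theorem~\ref{the:ParetoOrdering}(iii), equal means and $c_\alpha\le c_\beta\le\lambda$ give $\mu_\alpha\leicv\mu_\beta\leicv\delta_\lambda$, and Lemma~\ref{the:MixedPoissonIcv} converts this into $G_{p_\alpha^\circ}(s)\ge G_{p_\beta^\circ}(s)\ge G_{\Poi(\lambda)}(s)$ by testing the increasing concave order against $\phi_s(x)=x e^{-(1-s)x}$, which is decreasing and convex on $[2/(1-s),\infty)$; this works for all $s\le 1-2/c_{\alpha_0}$ because the supports all lie in $[c_{\alpha_0},\infty)$. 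The hypothesis $\lambda>\lambdacr$, i.e.\ $\lambda(1-\eta_\infty)>2$, is then precisely what guarantees $1-2/c_{\alpha_0}\ge\eta_\infty+\epsilon\ge\eta(p_\beta^\circ)$ for $\alpha_0$ large. Note also that your target interval $s\in[0,e^{-2}]$ is imported from Theorem~\ref{the:OrderingSmallExtinction}, whose threshold comes from a different test function ($x s^x$ convex on $[1,\infty)$); for the mixed-Poisson comparison the natural threshold is $1-2/c_{\alpha_0}$, which for $\lambda$ just above $\lambdacr$ may be smaller than $e^{-2}$, so proving the comparison on all of $[0,e^{-2}]$ is both harder than necessary and possibly false near the critical mean. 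The remaining ingredients of your proposal (the single-crossing argument giving $\mu_\beta\lecx\mu_\alpha$, hence $G_{p_\beta}\le G_{p_\alpha}$, and the limiting argument $\beta\to\infty$ for the second inequality) are sound and essentially equivalent to the paper's.
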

\begin{remark}
Note that $\zetas(\Poi(\lambda)) = \zeta(\Poi(\lambda))$ due to the fact that the Poisson distribution is invariant to downshifted size biasing (cf.\ Table~\ref{tab:SizeBiasing}).
\end{remark}

\begin{proof}
Fix $\lambda > \lambdacr$ and denote $\eta_\infty = \eta(\Poi(\lambda))$. Because $\lambda > \lambdacr$, it follows that $\lambda(1-\eta_\infty) > 2$, and therefore
\begin{equation}
 \label{eq:BigAlpha1}
 \lambda(1-\eta_\infty)  \ge  \frac{2}{1-1/\alpha_0} + \lambda \epsilon
\end{equation}
for some large enough $\alpha_0 > 1$ and small enough $\epsilon > 0$. Next, Lemma~\ref{the:ParetoBigAlpha} below shows that $\mu_\alpha^* = \Par(\alpha-1,c_\alpha) \to \delta_\lambda$ and hence also $p_\alpha^\circ = \MPoi(\mu_\alpha^*) \to \Poi(\lambda)$ in distribution as $\alpha \to \infty$.
The continuity of the standard branching functional (Lemma~\ref{the:ContinuityOfExtinctionProbabilities}) implies that $\eta(p_\alpha^\circ) \to \eta_\infty$, and we may choose a constant $\alphacr \ge \alpha_0$ such that $\eta(p_\alpha^\circ) \le \eta_\infty + \epsilon$ for all $\alpha \ge \alphacr$.

Assume now that $\alphacr \le \alpha \le \beta$. Then by Lemma~\ref{the:ParetoOrdering} we know that
\begin{equation}
 \label{eq:ParetoOrder}
 \mu_\alpha \lecv \mu_\beta \lecv \delta_\lambda.
\end{equation}
Furthermore, $c_{\alpha_0} \le c_\alpha \le c_\beta$ implies that the supports of $\mu_\alpha$, $\mu_\beta$, and $\delta_\lambda$ are contained in $[c_{\alpha_0},\infty)$. Lemma~\ref{the:MixedPoissonIcv} below implies that  $G_{p_\alpha^\circ}(s) \ge G_{p_\beta^\circ}(s) \ge G_{\Poi(\lambda)}$ for all $s \in [0,s_0]$ where $s_0 = 1-2/c_{\alpha_0}$. Now \eqref{eq:BigAlpha1} shows that
\[
 s_0 = 1 - \lambda^{-1} \left( \frac{2}{1-1/\alpha_0} \right)
 \ge 1 - \lambda^{-1} \left( \lambda(1-\eta_\infty) - \lambda \epsilon \right)
 = \eta_\infty + \epsilon,
\]
and hence the interval $[0,s_0]$ contains both $[0,\eta_\infty]$ and $[0,\eta(p_\beta^\circ)]$. By applying Lemma~\ref{the:ExtOrder} twice, it follows that $\eta(p_\alpha^\circ) \ge \eta(p_\beta^\circ) \ge \eta(\Poi(\lambda)) = \eta_\infty$. 

On the other hand, inequality \eqref{eq:ParetoOrder} together with \cite[Thm~8.A.14]{Shaked_Shanthikumar_2007} implies that $\MPoi(\mu_\alpha) \leicv \MPoi(\mu_\beta) \leicv \Poi(\lambda)$. Especially, $p_\alpha \leLt p_\beta \leLt \Poi(\lambda)$, so that $G_{p_\alpha} \ge G_{p_\beta} \ge G_{\Poi(\lambda)}$ pointwise on $[0,1]$. This together with the monotonicity of the generating functions shows that
\[
 G_{p_\alpha}(\eta(p_\alpha^\circ)) \ge G_{p_\beta}(\eta(p_\beta^\circ)) \ge G_{\Poi(\lambda)}(\eta(\Poi(\lambda))),
\]
and the claim follows by substituting the above inequalities into \eqref{eq:zetas}. 
\end{proof}

\begin{lemma}
\label{the:MixedPoissonIcv}
Let $p = \MPoi(\mu)$ and $q = \MPoi(\nu)$ where $\mu \leicv \nu$. Assume that the supports of $\mu$ and $\nu$ are contained in an interval $[c,\infty)$ for some $c \ge 2$. Then $G_{p^\circ}(s) \ge G_{q^\circ}(s)$ for all $s \in [0,1-2/c]$.
\end{lemma}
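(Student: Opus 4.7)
The plan is to mirror the strategy of Lemma~\ref{the:sbgf}. First I would rewrite $G_{p^\circ}$ as an integral transform of $\mu$: because $\MPoi(\mu)^\circ = \MPoi(\mu^*)$ (Example~\ref{exa:MixedPoisson}), one finds that
\[
 G_{p^\circ}(s) \weq \int_0^\infty e^{\lambda(s-1)} \, \mu^*(d\lambda) \weq \frac{1}{m_1(\mu)} \int_0^\infty \phi_t(\lambda) \, \mu(d\lambda),
\]
where $t = 1-s$ and $\phi_t(\lambda) = \lambda e^{-t\lambda}$, and the same identity holds with $\mu$ replaced by $\nu$. So the desired inequality splits into two pieces: a comparison of reciprocal means, $m_1(\mu)^{-1} \ge m_1(\nu)^{-1}$, and a comparison of integrals, $\int \phi_t \, d\mu \ge \int \phi_t \, d\nu$.

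The mean comparison is immediate: applying $\mu \leicv \nu$ to the increasing concave function $x \mapsto x$ gives $m_1(\mu) \le m_1(\nu)$. For the integral comparison, I would analyze $\phi_t$ via $\phi_t'(\lambda) = (1-t\lambda) e^{-t\lambda}$ and $\phi_t''(\lambda) = t(t\lambda - 2) e^{-t\lambda}$. These show that $\phi_t$ is decreasing on $[1/t,\infty)$ and convex on $[2/t,\infty)$. The hypothesis $s \le 1-2/c$ means $t \ge 2/c$, so $1/t \le 2/t \le c$, and hence $\phi_t$ is both decreasing and convex on $[c,\infty)$, which contains the supports of $\mu$ and $\nu$.

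To feed $\phi_t$ into the icv order I would extend $\phi_t|_{[c,\infty)}$ to all of $[0,\infty)$ as in the proof of Lemma~\ref{the:sbgf}, using the tangent line at $c$: define $\tilde{\phi}_t(\lambda) = \phi_t(c) + \phi_t'(c)(\lambda - c)$ for $\lambda \in [0,c)$ and $\tilde{\phi}_t(\lambda) = \phi_t(\lambda)$ for $\lambda \ge c$. Since $\phi_t'(c) < 0$ and $\phi_t'$ is nondecreasing on $[c,\infty)$, this globally defined $\tilde{\phi}_t$ is decreasing and convex on $[0,\infty)$, so $-\tilde{\phi}_t$ is increasing and concave. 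Applying $\mu \leicv \nu$ then yields $\int \tilde{\phi}_t \, d\mu \ge \int \tilde{\phi}_t \, d\nu$, and since the supports of $\mu$ and $\nu$ lie in $[c,\infty)$, this is the same as $\int \phi_t \, d\mu \ge \int \phi_t \, d\nu$. Combining with the mean bound gives $G_{p^\circ}(s) \ge G_{q^\circ}(s)$.

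The main conceptual point to watch is the threshold $s \le 1-2/c$: it is dictated precisely by the requirement that the support floor $c$ lie past the inflection point $2/t$ of $\phi_t$, so that $\phi_t$ is actually convex (and not just decreasing) on the support. Once that is in place, the tangent extension and the appeal to the icv order are entirely routine, and no difficulty arises from the division by $m_1$ because the two operations (integral of $\phi_t$ increases, mean decreases) point in the same direction.
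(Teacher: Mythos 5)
Your proof is correct and follows essentially the same route as the paper's: rewrite $G_{p^\circ}(s)$ as $m_1(\mu)^{-1}\int \phi\,d\mu$ with $\phi(x)=xe^{-(1-s)x}$, check that $\phi$ is decreasing and convex on $[c,\infty)$ precisely when $s\le 1-2/c$, and combine this with $m_1(\mu)\le m_1(\nu)$. The only difference is that you spell out the tangent-line extension needed to turn "decreasing and convex on the support" into a genuine increasing concave test function, a step the paper performs explicitly only in the proof of Lemma~\ref{the:sbgf} and leaves implicit here; your version is the more careful one.
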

\begin{proof}
Note first that for $G_{\MPoi(\mu)}(s) = L_{\mu}(1-s)$ and recall from Table~\ref{tab:SizeBiasing} that $\MPoi(\mu)^\circ = \MPoi(\mu^*)$. Hence $G_{p^\circ}(s) = L_{\mu^*}(1-s)$. Fix $s \in [0,1-2/c]$ and note that $G_{p^\circ}(s) = m_1(\mu)^{-1} \int \phi_s(x) \, \mu(dx)$, where $\phi_s(x) = x e^{-(1-s)x}$. Because $\phi_s'(x) = (1-(1-s)x) e^{-(1-s)x}$ and $\phi''_s(x) = (1-s)( (1-s)x-2 ) e^{-(1-s)x}$, it follows that the function $\phi_s$ is decreasing on $[\frac{1}{1-s},\infty)$ and convex on $[\frac{2}{1-s},\infty)$. Because $s \in [0,1-2/c]$, it follows that $\phi_s$ is decreasing and convex on the supports of $\mu$ and $\nu$. Therefore $\mu \leicv \nu$ implies $\int \phi_s d\mu \ge \int \phi_s d\nu$. Because $\mu \leicv \nu$ also implies that the first moments are ordered according to $m_1(\mu) \le m_1(\nu)$, we conclude that
\[
 G_{p^\circ}(s)
 \ =  \ m_1(\mu)^{-1} \int \phi_s \, d\mu
 \ \ge \ m_1(\nu)^{-1} \int \phi_s \, d\nu
 \ = \ G_{q^\circ}(s).
\]
\end{proof}

\begin{lemma}
\label{the:ParetoBigAlpha}
If $c_\alpha \to \lambda \ge 0$ as $\alpha \to \infty$, then $\Par(\alpha,c_\alpha) \to \delta_\lambda$.
\end{lemma}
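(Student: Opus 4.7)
The plan is to verify weak convergence directly from the definition, by checking pointwise convergence of cumulative distribution functions at every continuity point of the limit. Since $\delta_\lambda$ has CDF $F(t) = \mathbf{1}[t \ge \lambda]$, whose only discontinuity lies at $t = \lambda$, I need only confirm convergence at points $t \ne \lambda$.

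First I would write down the CDF of $\mathrm{Par}(\alpha, c_\alpha)$ obtained by integrating its density:
\[
 F_\alpha(t) \weq
 \begin{cases}
 0 & \text{if } t < c_\alpha, \\
 1 - (c_\alpha/t)^\alpha & \text{if } t \ge c_\alpha.
 \end{cases}
\]
Then I would split into two cases according to the position of $t$ relative to $\lambda$.

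For $t > \lambda$, since $c_\alpha \to \lambda$ we have $c_\alpha < t$ for all $\alpha$ large enough, and moreover $c_\alpha/t \le r$ for some fixed $r \in (\lambda/t, 1)$ once $\alpha$ is large. Consequently $(c_\alpha/t)^\alpha \le r^\alpha \to 0$, which gives $F_\alpha(t) \to 1 = F(t)$. For $t < \lambda$ (a case that only arises when $\lambda > 0$), we have $c_\alpha > t$ eventually, so $F_\alpha(t) = 0 = F(t)$ for all sufficiently large $\alpha$. The case $t < 0$ is trivial since both CDFs vanish there.

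Combining these two cases establishes $F_\alpha(t) \to F(t)$ at every continuity point of $F$, which by the standard Portmanteau criterion yields $\mathrm{Par}(\alpha, c_\alpha) \dto \delta_\lambda$. There is no substantive obstacle: the proof amounts to observing that the explicit tail factor $(c_\alpha/t)^\alpha$ decays geometrically once the base is bounded below $1$, so the entire argument is a short direct computation with the closed-form CDF.
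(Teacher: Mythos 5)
Your proof is correct, but it takes a different route from the paper. The paper's argument is a one-line coupling via the inverse transform: with $U$ uniform on $(0,1)$, the variable $X_\alpha = c_\alpha(1-U)^{-1/\alpha}$ is $\Par(\alpha,c_\alpha)$-distributed, and since $c_\alpha \to \lambda$ and $(1-U)^{-1/\alpha}\to 1$ pointwise, $X_\alpha \to \lambda$ almost surely and hence in distribution. Your approach instead verifies the Portmanteau criterion directly from the closed-form CDF $F_\alpha(t) = 1-(c_\alpha/t)^\alpha$ for $t \ge c_\alpha$, splitting into the cases $t>\lambda$ and $t<\lambda$; the key estimate $(c_\alpha/t)^\alpha \le r^\alpha \to 0$ with $r\in(\lambda/t,1)$ fixed is handled correctly, and you rightly note that only continuity points of $\mathbf{1}[t\ge\lambda]$ need checking. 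Your version is more elementary and self-contained (no appeal to the inverse-transform representation or to the fact that a.s.\ convergence implies convergence in distribution), at the cost of a short case analysis; the paper's version is shorter and sidesteps the CDF entirely. Both are complete proofs.
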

\begin{proof}
Let $U$ be a uniformly distributed random number in $(0,1)$. Then $X_\alpha = c_\alpha (1-U)^{-1/\alpha}$ has $\Par(\alpha,c_\alpha)$ distribution for all $\alpha$. Because $c_\alpha \to \lambda$ and $ (1-U)^{-1/\alpha} \to 1$, it follows that $X_\alpha \to \lambda$ almost surely, and hence also in distribution.
\end{proof}

\subsection{Numerical experiments}
\label{E2}

After a detailed analysis of the configuration model branching functional for Pareto-mixed Poisson degree distributions, a natural question to ask is whether or not similar observations remain valid or other types of distributions as well. We studied this question by performing  numerical experiments on two classes of distributions: lognormally mixed Poisson distributions and binomial distributions.

In Figure~\ref{fig:LogNMixedPoisson} we have plotted numerically evaluated values of $\zetas(p)$ where $p = \MPoi(\LogN(b,\sigma^2))$ is a lognormally mixed Poisson distribution with scale $\sigma^2 > 0$ and location $b = \log \lambda - \sigma^2/2$ chosen so that the mean of $p$ equals $\lambda > 0$  (Table~\ref{tab:SizeBiasing}). The curves are plotted as functions of $1/\sigma^2$, so that variability decreases along the horizontal axis. The behavior of the branching functional is the qualitatively the same as for the Pareto-mixed case: for network models with a small mean degree, higher variability may dramatically increase the size of the largest component.

\begin{figure}[h]
\begin{center}
  \includegraphics[height=.65\textwidth]{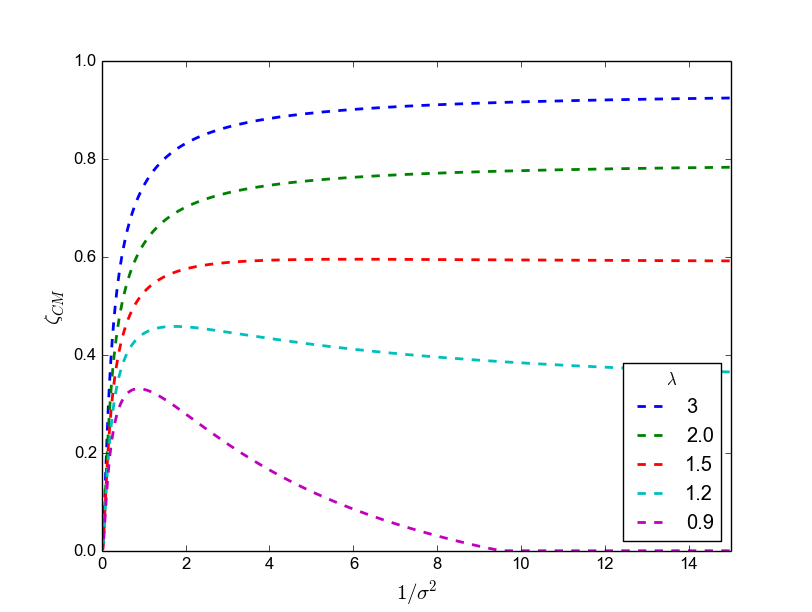}
\end{center}
\caption{\label{fig:LogNMixedPoisson} Configuration model branching functional $\zetas(p)$ for a collection of lognormally mixed Poisson distributions with mean $\lambda$, plotted as a function of $1/\sigma^2>0$.}
\end{figure}

In Figure~\ref{fig:Binomial} we have plotted numerically evaluated values of $\zetas(p)$ where $p = \Bin(n,\lambda/n)$ is a binomial distribution with mean $\lambda$, parameterized by $n \ge 3$. The variance of $p$ equals $\lambda(1-\lambda/n)$ and increases towards $\lambda$ along the horizontal axis. Also in this case with a light-tailed degree distribution, the overall qualitative picture is the same as for the Pareto and lognormally mixed Poisson distributions. The one difference is that all curves in Figure~\ref{fig:Binomial} appear to be monotone, either increasing (for small mean degree) or decreasing (for large mean degree). In addition, in this case the values of $\lambda \le 1$ always produce $\zetas(p) = 0$, because the downshifted size biasing of $\Bin(n,\lambda/n)$ equals $\Bin(n-1,\lambda/n)$ and has mean $\lambda(1-1/n) \le 1$ whenever $\lambda \le 1$.

\begin{figure}[h]
\begin{center}
\includegraphics[height=.65\textwidth]{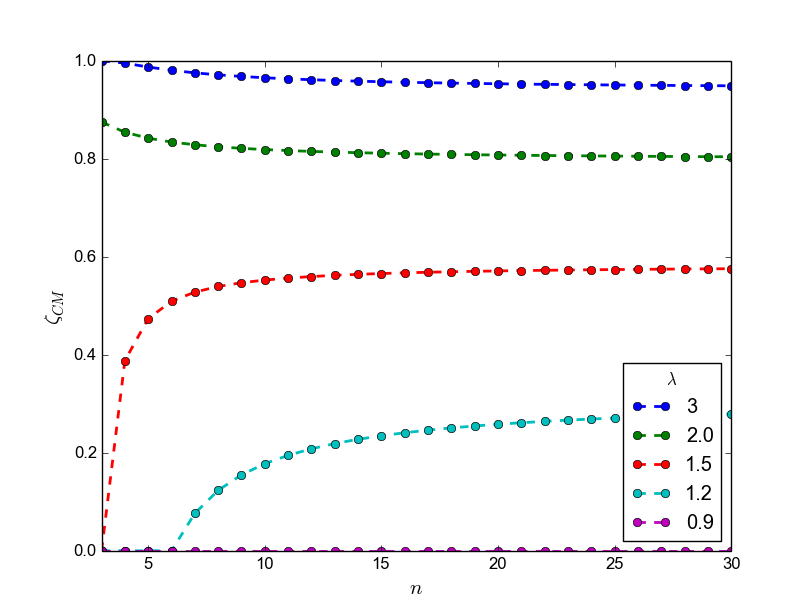}
\end{center}
\caption{\label{fig:Binomial} Configuration model branching functional $p \mapsto \zetas(p)$ for a collection of binomial degree distributions with mean $\lambda$, plotted as a function of $n\ge 3$.}
\end{figure}

\section{Conclusions}
\label{sec:Conclusions}
In this paper we studied the effect of degree variability to the global connectivity properties of large network models. The analysis was restricted to the configuration model and the associated uniform random graph with a given limiting degree distribution. Counterexamples were discovered both for a bounded support and power-law case that described that due to size biasing effects, increased degree variability may sometimes have a favorable effect on the size of the giant component, in sharp contrast to standard branching processes. We also proved using rigorous mathematical arguments that for certain natural classes of sufficiently supercritical network models, the increased degree variability has a negative effect on the global connectivity. Numerical experiments illustrate that these observations can be detected for both light-tailed and heavy-tailed degree distributions. Because most real-world social networks have mean degree much higher than one, we do not expect to encounter anomalous variability effects in their global connectivity structure. However, such effects might be important to take into account when studying long-range effects on epidemic generated graphs spanned by links over which a rare message or pathogen is transmitted.

\subsection*{Acknowledgements}

This research has been financially supported by the Emil Aaltonen Foundation. \cred{We thank two anonymous referees for helpful comments on improving the presentation of the paper.}

\appendix

\section{A continuity property of branching processes}
\label{sec:Appendix}
\begin{proof}[Proof of Lemma~\ref{the:ContinuityOfExtinctionProbabilities}]
We denote $\eta = \eta(p)$ and $\eta_n = \eta(p_n)$. We also denote $G(t) = G_p(t)$ and $G_n(t) = G_{p_n}(t)$. Observe first that
\[
 |G_n(t) - G(t)|
 \wle \sum_{k=0}^\infty t^k |p_n(k) - p(k)|
 \wle 2 \dtv(p_n,p).
\]
for all $t \in [0,1]$, where $\dtv$ refers to the total variation distance. Because convergence in total variation and convergence in distribution are equivalent on countable spaces, it follows that
\begin{equation}
 \label{eq:UniformConvergence}
  ||G_n - G||
  = \sup_{t \in [0,1]} |G_n(t) - G(t)|
  \to 0.
\end{equation}

(i) Consider first the case where $\eta(p) \in (0,1)$. Then $p(0) + p(1) < 1$. Hence $p(k) > 0$ for some $k \ge 2$, and this shows that
\[
 G''(t) = \sum_{k=2}^\infty p(k) k(k-1)t^{k-2} > 0
\]
for all $t \in (0,1)$. Note that by the continuity of $G(t)-t$, we see that
\begin{equation}
 \label{eq:ContinuitySurvival0}
 G(t) > t \quad \text{for all $t \in [0,\eta)$}.
\end{equation}

We will next show that
\begin{equation}
 \label{eq:ContinuitySurvival1}
 G(t) < t \quad \text{for all $t \in (\eta,1)$}.
\end{equation}
Assume, on the contrary, that $G(\eta') \ge \eta'$ for some $\eta' \in (\eta,1)$. Then indeed $G(\eta') = \eta'$, because by the convexity of $G$, the point $(\eta',G(\eta'))$ must lie below the line connecting the points $(\eta,G(\eta)) = (\eta,\eta)$ and $(1,G(1)) = (1,1)$. But then the points $(\eta,G(\eta), (\eta',G(\eta')), (1,G(1))$ all lie on the straight line between $(\eta,\eta)$ and $(1,1)$, and the convexity $G$ implies that $G(t) = t$ on $[\eta,1]$. This contradicts the fact that $G''(t) > 0$ on $(\eta,1)$. Hence we conclude that \eqref{eq:ContinuitySurvival1} is valid.

Next fix any $\epsilon > 0$ such that $(\eta-\epsilon, \eta+\epsilon) \subset (0,1)$. Then \eqref{eq:ContinuitySurvival0} and the continuity of $G(t) - t$ imply that $\delta_1 := \inf_{t \in [0,\eta-\epsilon]} (G(t)-t) > 0$. On the other hand \eqref{eq:ContinuitySurvival1} implies that $\delta_2 := (\eta+\epsilon) - G(\eta+\epsilon) > 0$. By \eqref{eq:UniformConvergence} we may fix $n_0$ such that $||G_n - G|| \le \min(\delta_1,\delta_2)/2$ for all $n \ge n_0$. Then for any $n \ge n_0$, we find that $G_n(t) - t \ge \delta_1/2$ on $[0,\eta-\epsilon]$ and $G_n(\eta+\epsilon) \le G(\eta+\epsilon) + ||G_n-G|| \le (\eta+\epsilon) - \delta_2/2$.
Hence $G_n(t) > t$ on $[0,\eta-\epsilon]$ and $G_n(t) < t$ for $t = \eta+\epsilon$. By the continuity of $G_n$, we conclude that $G_n$ has no fixed points in $[0,\eta-\epsilon]$ and at least one fixed point in $(\eta-\epsilon,\eta+\epsilon)$. Therefore the smallest nonzero fixed point of $G_n$ satisfies $\eta_n \in (\eta-\epsilon, \eta+\epsilon)$ for all $n \ge n_0$.

(ii) Consider next the case with $\eta = 1$. Then $G(0) > 0$ and the continuity of $G$ imply that $G(t) > t$ on the interval $(0,1)$. Especially, $\delta = \inf_{t \in [0,1-\epsilon]} G(t) > 0$ for any $\epsilon>0$. For any large enough $n$ such that $||G_n-G|| \le \delta/2$, it follows that $G_n(t) \ge G(t) - ||G_n-G|| \ge t + \delta/2$ for all $t \in [0,1-\epsilon]$. This shows that $G_n$ has no fixed points in $[0,1-\epsilon]$, and hence $\eta_n > 1-\epsilon$ for all big enough $n$.
\end{proof}

\section{Stochastic ordering of Pareto distributions}

The following result characterizes stochastic ordering properties of Pareto distributions. For $i=1,2$, let $\mu_i = \Par(\alpha_i,c_i)$ be the Pareto distribution with shape $\alpha_i > 1$, scale $c_i > 0$, and mean $\lambda_i = c_i(1-1/\alpha_i)^{-1}$.

\begin{theorem}
\label{the:ParetoOrdering}
For any Pareto distributions $\mu_i = \Par(\alpha_i,c_i)$ with shape $\alpha_i > 1$, scale $c_i > 0$, and mean $\lambda_i = c_i(1-1/\alpha_i)^{-1}$:
\begin{enumerate}[(i)]
\item $\mu_1 \leicx \mu_2$ if and only if $\lambda_1 \le \lambda_2$ and $\alpha_1 \ge \alpha_2$.
\item $\mu_1 \lecx \mu_2$ if and only if $\lambda_1 = \lambda_2$ and $\alpha_1 \ge \alpha_2$.
\item $\mu_1 \leicv \mu_2$ if and only if $\lambda_1 \le \lambda_2$ and $c_1 \le c_2$.
\end{enumerate}
\end{theorem}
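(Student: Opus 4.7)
The plan is to reduce each of (i), (ii), (iii) to a pointwise comparison of two integral transforms of the survival function. Recall the standard characterizations: $\mu_1 \leicx \mu_2$ if and only if $\pi_1(t) \le \pi_2(t)$ for all $t \ge 0$, where $\pi_i(t) = \E[(X_i - t)_+] = \int_t^\infty \bar F_i(s)\,ds$; $\mu_1 \leicv \mu_2$ if and only if $G_1(t) \le G_2(t)$ for all $t \ge 0$, where $G_i(t) = \E[\min(X_i, t)] = \int_0^t \bar F_i(s)\,ds$; and $\mu_1 \lecx \mu_2$ if and only if $\mu_1 \leicx \mu_2$ together with $\lambda_1 = \lambda_2$. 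For a Pareto distribution $\Par(\alpha, c)$ an explicit computation gives $\pi(t) = \lambda - t$ on $[0, c]$ and $\pi(t) = \frac{c^\alpha}{(\alpha-1) t^{\alpha-1}}$ on $[c, \infty)$, with $G(t) = t$ on $[0, c]$ and $G(t) = \lambda - \pi(t)$ on $[c, \infty)$.

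Necessity is read off from boundary behaviour. For (i), evaluating $\pi_1 \le \pi_2$ at $t = 0$ yields $\lambda_1 \le \lambda_2$, while the tail asymptotics $\pi_i(t) \sim \frac{c_i^{\alpha_i}}{(\alpha_i - 1) t^{\alpha_i - 1}}$ force $\alpha_1 \ge \alpha_2$. Part (ii) then reduces to (i) once equality of means is imposed. For (iii), $G_i(t) \to \lambda_i$ as $t \to \infty$ gives $\lambda_1 \le \lambda_2$, and if $c_1 > c_2$ then any $t \in (c_2, c_1)$ produces $G_1(t) = t > G_2(t)$, a contradiction, so $c_1 \le c_2$.

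For sufficiency I would study the difference functions $e = \pi_2 - \pi_1$ and $d = G_2 - G_1$, which satisfy $e'(t) = \bar F_1(t) - \bar F_2(t) = -d'(t)$, $e(0) = \lambda_2 - \lambda_1$, $e(\infty) = 0$, $d(0) = 0$, and $d(\infty) = \lambda_2 - \lambda_1$. The key geometric fact is that two distinct Pareto survival functions cross at most once on $(\max(c_1, c_2), \infty)$, which confines the sign changes of $\bar F_1 - \bar F_2$ to at most one point there (in addition to the obvious transitions at $\min(c_1, c_2)$ and $\max(c_1, c_2)$). Consequently $e$ and $d$ are piecewise monotone and unimodal on $[\max(c_1, c_2), \infty)$, so nonnegativity on that interval reduces to the two endpoint checks; the piecewise formulas then handle the remaining intervals using only $\lambda_1 \le \lambda_2$ and $\alpha_1 \ge \alpha_2$ for (i), respectively $\lambda_1 \le \lambda_2$ and $c_1 \le c_2$ for (iii). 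For (iii) the nontrivial endpoint inequality $d(c_2) \ge 0$ reduces to the tautology $\E[(X_1 - c_2)_+] \ge \E[X_1 - c_2]$.

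The main obstacle is the case split for (i): the hypotheses $\lambda_1 \le \lambda_2$ and $\alpha_1 \ge \alpha_2$ leave the ordering of $c_1$ and $c_2$ unrestricted, so both $c_1 \le c_2$ and $c_1 > c_2$ must be treated, and the shape of $e$ differs qualitatively between them. The single-crossing property of Pareto tails is what unifies the two cases and turns the argument into a monotonicity exercise rather than a parameter-by-parameter computation.
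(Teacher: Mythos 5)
Your proof is correct in outline but follows a genuinely different route from the paper. The paper works in quantile space: it uses the integrated quantile functions $\bar H_i(t)=\int_t^1 F_i^{-1}(s)\,ds=\lambda_i(1-t)^{1-1/\alpha_i}$ and $H_i(t)=\int_0^t F_i^{-1}(s)\,ds=\lambda_i(1-(1-t)^{1-1/\alpha_i})$, together with the characterizations $\mu_1\leicx\mu_2\iff\bar H_1\le\bar H_2$ and $\mu_1\leicv\mu_2\iff H_1\le H_2$. The payoff is that for Pareto laws these transforms are \emph{not} piecewise, so each ordering collapses to a single closed-form power inequality such as $\lambda_1(1-t)^{1-1/\alpha_1}\le\lambda_2(1-t)^{1-1/\alpha_2}$, from which both directions of (i) are read off in two lines; for (iii) the paper splits on $\alpha_1\ge\alpha_2$ (where it actually gets $\mu_1\lest\mu_2$ from $F_1^{-1}\le F_2^{-1}$) versus $\alpha_1\le\alpha_2$ (a ratio estimate on $H_2/H_1$). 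You instead work in real space with the stop-loss transform $\pi_i$ and $G_i(t)=\E\min(X_i,t)$, which are piecewise for Pareto, and you compensate with a Karlin--Novikoff single-crossing argument for the survival functions. Your necessity arguments are clean and correct, and your approach is arguably more robust (single-crossing reasoning extends beyond the Pareto family), at the cost of a heavier case analysis.

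Two details in the sufficiency step still need to be nailed down. First, ``unimodal on $[\max(c_1,c_2),\infty)$, so nonnegativity reduces to endpoint checks'' is only valid if the difference is increasing-then-decreasing (interior maximum); if the crossing of $\bar F_1-\bar F_2$ went the other way you would have an interior minimum and the endpoint checks would prove nothing. You must verify the direction: for (i), $\alpha_1\ge\alpha_2$ gives $\bar F_1$ the lighter tail, so $\bar F_1-\bar F_2$ changes sign from $+$ to $-$ and $e$ is indeed increasing-then-decreasing; for (iii), $\bar F_2(c_2^+)=1\ge\bar F_1(c_2^+)$ plays the analogous role for $d$. Second, you identify the nontrivial endpoint inequality only for (iii) (where $d(c_2)\ge0$ is indeed the tautology you state). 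For (i) in the subcase $c_1<c_2$ the left endpoint value $e(c_2)=\pi_2(c_2)-\pi_1(c_2)=(\lambda_2-c_2)-\E(X_1-c_2)_+$ is not a tautology; it follows from $\pi_1(c_2)\le\frac{c_1}{\alpha_1-1}=\lambda_1/\alpha_1\le\lambda_2/\alpha_2=\lambda_2-c_2$, which uses both hypotheses $\lambda_1\le\lambda_2$ and $\alpha_1\ge\alpha_2$. With these two points made explicit, the argument is complete.
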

Result (i) above quantifies the intuitively natural property that a larger mean and a heavier tail makes a Pareto distribution bigger and more variable in the increasing convex order. Interestingly, result (iii) may be valid for both $\alpha_1 < \alpha_2$ and $\alpha_1 > \alpha_2$, depending on the value of the scale parameter.

\begin{proof}
(i) Let $F_i^{-1}(s) = c_i (1-s)^{-1/\alpha_i}$ be the quantile function of $\mu_i$, and denote the upper and lower integrated quantile functions of $\mu_i$ by
\begin{align*}
 \bar H_i(t)
 &= \int_t^1 F_i^{-1}(s) \, ds = \lambda_i (1-t)^{1-1/\alpha_i}, \\
 H_i(t) &= \int_0^t F_i^{-1}(s) \, ds = \lambda_i (1- (1-t)^{1-1/\alpha_i}).
\end{align*}

Now by \cite[Thm~3.A.5, Thm 4.A.3]{Shaked_Shanthikumar_2007} $\mu_1 \leicx \mu_2$ if and only if $\bar H_1(t) \le \bar H_2(t)$ for all $t \in (0,1)$,  that is,
\begin{equation}
 \label{eq:ParetoUpper}
 \lambda_1 (1-t)^{1-1/\alpha_1}
 \le \lambda_2 (1-t)^{1-1/\alpha_2}
\end{equation}
for all $t \in (0,1)$. If $\lambda_1 \le \lambda_2$ and $\alpha_1 \ge \alpha_2$, the the above inequality holds for all $t \in (0,1)$, and hence $\mu_1 \leicx \mu_2$.

Assume next that $\mu_1 \leicx \mu_2$. Then  \eqref{eq:ParetoUpper} is valid for all $t \in (0,1)$. By letting $t \to 0$, it follows that $\lambda_1 \le \lambda_2$. Inequality \eqref{eq:ParetoUpper} also implies that the fraction
\[
 \frac{(1-t)^{1-1/\alpha_1}}{(1-t)^{1-1/\alpha_2}}
 = (1-t)^{1/\alpha_2 - 1/\alpha_1}
\]
is bounded over $t \in (0,1)$. This is possible only if $1/\alpha_2 - 1/\alpha_1 \ge 0$, so we obtain $\alpha_1 \ge \alpha_2$.

(ii) It is sufficient to note that $\mu_1 \lecx \mu_2$ if and only if $\mu_1 \leicx \mu_2$ and $m_1(\mu_1) = m_1(\mu_2)$.
The claim hence follows from (i).

(iii) Recall \cite[Thm~2.58]{Follmer_Schied_2004} that $\mu_1 \leicv \mu_2$ if and only if $H_1(t) \le H_2(t)$ for all $t \in [0,1]$.

Assume now that $\mu_1 \leicv \mu_2$. Then $H_1(1) \le H_2(1)$ implies $\lambda_1 \le \lambda_2$. Moreover, $H_1(0)=H_2(0)=0$ implies that $t^{-1}(H_1(t)-H_1(0)) \le t^{-1}(H_2(t)-H_2(0))$ for all $t \in (0,1)$, and by letting $t \to 0$, we get $c_1 = H_1'(0) \le H_2'(0) = c_2$, so that $c_1 \le c_2$. (Note that this reasoning indeed showed that $\mu_1 \lest \mu_2$ which automatically implies $\mu_1 \leicv \mu_2$.)

To prove the other direction of the claim, let us next assume that $\lambda_1 \le \lambda_2$ and $c_1 \le c_2$. Let us analyze separately the cases $\alpha_1 \ge \alpha_2$ and $\alpha_1 \le \alpha_2$.
\begin{enumerate}[(a)]
\item If $\alpha_1 \ge \alpha_2$, then
\[
 F_1^{-1}(s) = c_1 (1-s)^{-1/\alpha_1}
 \le c_2 (1-s)^{-1/\alpha_2} = F_2^{-1}(s)
\]
for all $s \in (0,1)$.
\item If $\alpha_1 \le \alpha_2$, then $\lambda_1 \le \lambda_2$ shows that
\[
 \frac{H_2(t)}{H_1(t)}
 = \left( \frac{\lambda_2}{\lambda_1} \right) \frac{1-(1-t)^{a_2}}{1-(1-t)^{a_1}}
 \ge \frac{1-(1-t)^{a_2}}{1-(1-t)^{a_1}},
\]
where $a_i = 1-1/\alpha_i$. Now $\alpha_1 \le \alpha_2$ implies $a_1 \le a_2$, Hence also $(1-t)^{a_1} \ge (1-t)^{a_2}$,
which shows that $H_2(t) \ge H_1(t)$ for all $t \in (0,1)$.
\end{enumerate}
\end{proof}

\bibliographystyle{alpha}
\bibliography{lslReferences}

\end{document}